\documentclass[submission,copyright,creativecommons]{eptcs}
\usepackage{breakurl}             
\usepackage{underscore}           

\usepackage{microtype}
\usepackage{amsmath,latexsym,amssymb,amsthm}
\usepackage[dvipsnames]{xcolor}

\newif\iffull\fullfalse

\newtheorem{example}{Example}

\newtheorem{definition}{Definition}
\newtheorem{lemma}{Lemma}
\newtheorem{proposition}{Proposition}

\usepackage[all]{xy}

\iffull

\else

\fi

\newcommand{\define}{\triangleq}




\marginparsep 20pt
\marginparwidth 40pt

 \newcounter{ncomm}%

\newcommand{\CS}{CS}

\newcommand{\A}{{\cal A}}
\newcommand{\E}{{\cal E}}
\newcommand{\F}{{\cal F}}

\newcommand{\Conf}[1]{{\it Conf}(#1)}
\newcommand{\STr}[1]{{\it SeqTr}(#1)}
\newcommand{\StTr}[1]{{\it StepTr}(#1)}
\newcommand{\POM}[1]{{\it Pom}(#1)}
\newcommand{\pom}[1]{{\it pomset}(#1)}
\newcommand{\poset}[1]{{\it poset}(#1)}
\newcommand{\red}[1]{\stackrel {#1} \longrightarrow }

\newcommand{\iteq}{\approx_{\rm it}}
\newcommand{\steq}{\approx_{\rm st}}
\newcommand{\pteq}{\approx_{\rm pt}}
\newcommand{\ibeq}{\approx_{\rm ib}}
\newcommand{\sbeq}{\approx_{\rm sb}}
\newcommand{\pbeq}{\approx_{\rm pb}}
\newcommand{\hbeq}{\approx_{\rm hb}}
\newcommand{\whbeq}{\approx_{\rm whb}}
\newcommand{\hhbeq}{\approx_{\rm hhb}}


\title{Conflict vs Causality in Event Structures}
\author{Daniele Gorla, Ivano Salvo, Adolfo Piperno
\institute{Sapienza University of Rome, Dpt. of Computer Science}
\email{\{gorla,salvo,piperno\}@di.uniroma1.it}
}


\begin{document}

\sloppy

\maketitle

\begin{abstract}
Event structures are one of the best known models for concurrency. 
Many variants of the basic model and many possible notions of equivalence for them
have been devised in the literature. 
In this paper, we study how the spectrum of equivalences for Labelled Prime Event Structures
built by Van Glabbeek and Goltz changes if we consider two simplified notions of event structures: 
the first is obtained by removing the causality relation (Coherence Spaces) and 
the second by removing the conflict relation (Elementary Event Structures).
As expected, in both cases the spectrum turns out to be simplified, since some notions of equivalence 
coincide in the simplified settings; actually, we prove that removing
causality simplifies the spectrum considerably more than removing conflict.
Furthermore, while the labeling of events and their cardinality play no role when removing
causality, both the labeling function and the cardinality of the event set dramatically influence the spectrum of
equivalences in the conflict-free setting. 
\end{abstract}

\section{Introduction}

Event structures \cite{NielsenPW81,WN95} are one of the best known models for concurrency.
Basically, they are collections of possible events, some of which are conflicting
(i.e., the execution of an event forbids the execution of other events), while 
others are causally dependent (i.e., an event cannot be executed if
it has not been preceded by other ones). 
Prime Event Structures (written PESs) are the earliest and simplest form of event structure, where causality is a partial order and conflict between events is inherited by their causal successors.
Events are often labelled with actions, to represent different occurrences of the same action. In this paper, we shall focus on labelled PESs, referring to them simply as PESs, for the sake of simplicity.

Conflict and causality are fundamental concepts 
for concurrency; indeed, they can
also be found in other well-established models for concurrent computation, like Petri nets 
\cite{Pet77,Pet81,Petri}
and process algebras
\cite{BK84,CSP78,Mil89}
(where they are called choice and sequential composition, respectively). Not incidentally,
both conflict and causality influence the evolution of an event structure, whose semantics is given by means of 
{\em configurations}: these are finite conflict-free subsets of events that are closed by causal predecessors. 
Configurations take note of the events occurred so far during a computation.
Indeed, starting from the empty configuration, the evolution of an event structure is obtained by 
selecting one or more events that are
causally enabled by the events executed so far, and non-conflicting with any of them.
However, not all sets of events can
be simultaneously executed: this yields the derived notion of {\em concurrent} events, that are those
that are neither in conflict nor causally dependent from one another.

A fruitful research line is the study of different possible notions of equivalence
for event structures, inspired by the richness of equivalences for process algebras \cite{G90,G93}.
Indeed, apart from the classical distinction between trace and bisimulation-based equivalences,
in the framework of PESs many features can be observed to distinguish two event
structures. In this paper, we follow \cite{GG01} and consider the following equivalences:
\begin{enumerate}
\item {\em interleaving trace and bisimulation equivalences} (written $\iteq$ and $\ibeq$):
these are the direct counterparts of trace and bisimulation equivalence for process algebras  \cite{CSP78,Mil89}; 
in the framework of PESs, only (the label of) one single event at a time is observed, either in a sequence forming a trace
or in the bisimulation game based on coinduction.

\item {\em step trace and bisimulation equivalences} (written $\steq$ and $\sbeq$) \cite{Pomello85},
where the units of observation are sets of concurrent (and causally enabled) events
To be more precise, we do not observe sets of events but the multisets of the labels associated to
the selected events (recall that the same label can be given to different events).

\item {\em pomset trace and bisimulation equivalences} (written $\pteq$ and $\pbeq$) \cite{BoudolC87},
where the units of observation are \emph{sets of events} together with their causality and concurrency relations;
again, since different events can have the same label, a set of events generates a partially
ordered multiset (hence, the name {\em pomset}), based on the causality relation.

\item different variants of {\em history preserving bisimulation}, where the configurations
of the two PESs related by a bisimulation must have the same causal dependencies.
According to how this requirement is formalized, we have:
\begin{enumerate}
\item {\em weak history preserving bisimulation} (written $\whbeq$) \cite{DDM87}, where
every pair of configurations is formed by isomorphic (w.r.t. their causal dependencies) pomsets;
\item {\em history preserving bisimulation} (written $\hbeq$) \cite{DDM88,RT88}, where
every pair of configurations is formed by isomorphic (w.r.t. their causal dependencies) pomsets
and the isomorphism grows during the computation (whereas, for $\whbeq$ two consecutive
pairs of configurations could be related by totally different isomorphisms);
\item {\em hereditary history preserving bisimulation} (written $\hhbeq$) \cite{bedna01}, which is $\hbeq$ 
with the additional requirement that the isomorphism is maintained also when going
back in the computation.
\end{enumerate}
\end{enumerate}
These 9 equivalences, together with PES isomorphism $\cong$, form a well known spectrum \cite{Fecher04,GG01} that we depict in Figure \ref{eq-pes}
(where the term {\em autoconcurrency} means existence of a configuration
containing two different concurrent events with the same label).

Orthogonally, since their birth, many variants of the basic framework have appeared in the literature.
The basic model has been both extended with more sophisticated features and simplified by
removing features. Richer notions of event structures include, among the others,
flow event structures \cite{BC88}, stable/non-stable event structures \cite{Winskel86}
and configuration structures \cite{GP95}.
By contrast, simplified models are obtained either by removing the causality relation,
yielding {\em coherence spaces} \cite{Girard87} (written \CS s in this paper), 
or by removing the conflict relation, yielding
{\em elementary event structures} \cite{NielsenPW81} (written EESs). 
Both these models have interesting applications
in the literature: the former one is used for giving the semantics 
of linear logic \cite{Girard87} and typed lambda-calculus
\cite{BE91,BE91b}; the latter one is a common variant of PESs
(\cite{NielsenPW81,NT02,GG01}, just to cite a few).

\begin{figure}[t]
\hspace*{-.8cm}
	\begin{tabular}{cc}
	    \begin{tabular}{c}
		\begin{minipage}{0.46\textwidth}
		\center\ 
                    {\hfill\framebox{\qquad\quad
                    $
                    \xymatrix@R=20pt{
                    & \iteq
                    \\
                    \ar[ur]
                    \ibeq
                    &
                    \ar[u]
                    \steq
                    \\
                    \ar[u] \ar[ur]
                    \sbeq 
                    &
                    \pteq \ar[u]
                    \\
                    \ar[u] \ar[ur]
                    \pbeq
                    &
                    \ar[ul]\ar[u]
                    \whbeq
                    \\
                    \ar@{-->}[ur]
                    \ar[u]
                    \hbeq
                    \\
                    \ar[u]
                    \hhbeq
                    \\
                    \ar[u]
                    \cong
                    }
                    $\qquad\qquad
                    }\hfill}
                    \vspace*{-.2cm}
                    \caption{The spectrum of equivalences for PESs
                    {\footnotesize (`$\rightarrow$' means `$\subset$'; 
                    `$\dashrightarrow$' means `$=$', if no autoconcurrency is present,
                    and means `$\subset $', otherwise)}}
                    \label{eq-pes}
		\end{minipage}
                    \vspace*{.5cm}
	    \\
		\begin{minipage}{0.46\textwidth}
		\center
                {\hfill\framebox{
                $
                \xymatrix@R=20pt{
                \iteq\ =\ \steq\ =\ \pteq
                \\
                \ibeq\ =\ \sbeq\ =\ \pbeq\ =\ \whbeq\ =\ \hbeq
                \ar[u]
                \\
                \hhbeq
                \ar[u]
                \\
                \cong
                \ar[u]
                }
                $
                }\hfill}
                \vspace*{-.2cm}
                \caption{The spectrum for \CS s}
                \label{eq-sharp}
		\end{minipage}
  	    \end{tabular}
&
	       \begin{tabular}{c}
		  \begin{minipage}{0.50\textwidth}
			\center
                {\hfill\framebox{
                $
                \xymatrix@R=3pt{
                & \iteq
                \\ 
                \\
                \\
                \ibeq \ar[uuur] & \steq \ar[uuu]
                \\ 
                \\
                \\
                \sbeq\ar[uuur]\ar[uuu]
                \\
                \\
                \\ 
                \pbeq\ =\ \pteq\ =\ \whbeq\ =
                \ar[uuu]
                \\ 
                \hbeq\ =\ \hhbeq\ =\ \cong
                }
                $
                }\hfill}
                \vspace*{-.3cm}
                \caption{The spectrum for finite EESs}
                \label{fig:finiteEESs}
		\end{minipage}
		\vspace*{.4cm}
		\\
		\begin{minipage}{0.50\textwidth}
			\center
                {\hfill\framebox{\qquad\quad
                $
                \xymatrix@R=18pt{
                & \iteq
                \\
                \ar[ur]
                \ibeq
                &
                \ar[u]
                \steq
                \\
                \ar[u] \ar[ur]
                \sbeq 
                &
                \pteq \ar[u]
                \\
                \ar[u] \ar[ur]
                \pbeq
                &
                \ar[ul]\ar[u]
                \whbeq
                \ar@{-->}@<-0ex>[l]^1
                \\
                \ar[u]
                \hbeq
                \ar@{-->}[ur]
                &
                \hspace*{-2.4cm}\,^2
                \\
                \ar@{-->}[u]^3
                \hhbeq
                \\
                \ar@{-->}[u]^4
                \cong
                }
                $\quad\qquad
                }\hfill}
                \vspace*{-.2cm}
                \caption{The spectrum for infinite EESs
                \footnotesize{(a numbered dashed arrow denotes an open question; for questions 2, 3 and 4, the arrow becomes solid if the question has a positive answer and becomes `$=$' otherwise; for question 1, the arrow disappears il the answer is positive and becomes solid otherwise).}}
                \label{infEESs}
		\end{minipage}
	    \end{tabular}
	\end{tabular}
\end{figure}

The aim of this paper is to investigate how the spectrum of Figure \ref{eq-pes}
changes when passing from PESs to \CS s and EESs. 
As expected, in both cases the spectrum turns out to be simplified, since some notions of equivalence 
coincide in the simplified settings. So, for every possible inclusion, we have to either
(1) prove that the inclusion becomes an equality, or (2) provide an example
in the simplified setting to distinguish the two equivalences (and confirm properness of the inclusion
also in the simplified setting).

The spectrum is radically
simplified in the framework of \CS s, as depicted in Figure \ref{eq-sharp}. As evident,
removing the causality relation reduces a complex lattice to a simple chain:
trace equivalences all coincide and represent the coarsest notion; 
they properly include bisimilarities (that all coincide, except for $\hhbeq$)
that in turn properly include the back-and-forth variant \cite{DMV90} of $\hbeq$.
Furthermore, the labeling function plays no role in such results; so, even the ``flattening'' labeling
(that associates the same action to every event) does not change the spectrum.

The situation is more articulated when conflict is removed, hence in the framework of
EESs. A posteriori, this is not surprising because a partial order (viz., the causality
relation) is a richer mathematical object than an irreflexive and symmetric relation
(viz., the conflict relation). What is really surprising is the fact that having finitely or 
infinitely many events makes a significant difference in terms of the distinguishing power
of  the studied equivalences; Figures \ref{fig:finiteEESs} and \ref{infEESs} give a visual
account of the difference. The first easy, but still interesting, result for finite EESs is
that $\pteq$, $\pbeq$, $\whbeq$, $\hbeq$, $\hhbeq$ and $\cong$ all coincide. 
This can be justified by observing that, being finite and without conflict,
the set of all the events of every such EES is a configuration of the EES itself; so, all notions
of equivalence that rely on some kind of pomset isomorphism collapse to EES isomorphism.
By contrast, for infinite EESs this does not hold anymore and some more inclusions
that were proper in Figure \ref{eq-pes} remain proper also in Figure \ref{infEESs}. 
%
Four questions remain open about strictness of some inclusions for infinite EESs.
However, even if the spectrum is not fully worked out, we have some
examples that let us claim that cardinality of the event set matters when only causality
is considered. By contrast, cardinality has no impact on the spectrum for \CS s.
Furthermore, we prove that restricting to ``flattening'' labeling functions makes $\iteq$ and $\ibeq$
collapse for EESs (again, in contrast with \CS s).

For all these reasons, our results seem to suggest that causality is a more foundational building block than conflict in event structures,
since it has a deeper impact on the discriminating power of equivalences for such models and
because it is more sensitive than conflict to issues like the cardinality of the set of events and their labeling.

The rest of the paper is organized as follows. In Section 2, we recall the basic definitions
and the spectrum for PESs, as reported in \cite{Fecher04}. Then, we move to consider
\CS s (Section 3) and EESs (Section 4); for the latter model, we also distinguish what happens
for finite (Section 4.1) and infinite structures (Section 4.2). Section 5 concludes the paper.

\section{Background: Prime Event Structures}
\label{model}

We start by summing up some well known notions from the theory of Event Structures~\cite{NielsenPW81}, by following the presentation in \cite{GG01}.

\begin{definition}[Prime Event Structures  \cite{NielsenPW81,WN95}]
A (labeled) {\em Prime Event Structure} (PES, for short) over an alphabet
$\A$ is a 4-tuple $\E = (E,\leq,\sharp,l)$ such that:
\begin{itemize}
\item $E$ is a set of {\em events};
\item $\leq\ \subseteq\, E\times E$ is the {\em causality} relation, i.e. a
partial order such that, for all $e \in E$, the set $\{e' : e' < e\}$ is finite;
\item $\sharp\ \subseteq\, E\times E$ is the {\em conflict} relation, i.e. an irreflexive
and symmetric relation such that, for all $e, e', e''\in E$, if $e < e'$ and $e \sharp e''$, then $e' \sharp e''$;
\item $l: E \rightarrow \A$ is the {\em labeling}\, function.
\end{itemize}
\end{definition}

Intuitively, $e' < e$ means that $e$ cannot happen before $e'$
(so, the execution of $e$ causally depends on the execution of $e'$),
whereas $e \sharp e'$ means that $e$ and $e'$ are mutually exclusive
(so, the execution of one prevents the execution of the other).
The condition $|\{e' : e' < e\}| < \infty$ ensures that every event can be
executed in a finite amount of time (i.e, after the execution of finitely many events).
Conflict inheritance (the condition in the third item of the previous definition)
is a sort of `sanity' condition, ensuring that every event inherits the conflicts of all its causal predecessors.
Finally, labels represent actions entailed by events, and so different events can have
the same label; this corresponds to the fact that the same action can occur different 
times during the execution of a system.

A derived notion is the {\em concurrency} relation, defined as follows:
$e\ co\ e'$ iff $(e,e') \not\in\ \leq\, \cup\, \geq\, \cup\ \sharp$.
When convenient, we shall write a PES by using the usual process algebra notation, where 
`$\parallel$' means `$co$', \linebreak `$;$' means `$<$'  and 
`$+$' means `$\sharp$'; moreover, we just write the labels, assuming
that the underlying events are all different.\footnote{
We remark that we shall use this syntax only when it comes handy to describe some particular 
PES in a succinct way;  in particular, in this paper we consider PESs as a per se semantic model, 
and not, e.g., as the interpretation domain for some process algebra. 
Furthermore, notice that PESs do not coincide with all the ESs that `$\parallel$', `$;$' and 
`$+$' can define: there are terms of this algebra that denote ESs that are not prime (e.g., $(a+b);c$) 
and there are PESs that are not definable using the given algebra 
(e.g., the event structure $\E$ in the proof of Prop. \ref{prop:st-i}) \cite{Gis84,Pratt85,Pra86}.
}

\begin{example}
\label{ex:PES-isom}
The expression $(a\parallel b) + (a;b)$
denotes the PES $\E=(E,\leq,\sharp,l)$ such that $E = \{e_1,e_2,e_3,e_4\}$, 
$e_i \leq e_i$ (for $i \in \{1,2,3,4\}$), $e_3 \leq e_4$,
$\sharp = \{(e_1,e_3),(e_2,e_3),(e_3,e_1),(e_3,e_2),$ $(e_1,e_4),(e_2,e_4),(e_4,e_1),(e_4,e_2)\}$,
$l(e_1) = l(e_3) = a$, and $l(e_2) = l(e_4) = b$.
\end{example}

To be precise,
$(a\parallel b) + (a;b)$ denotes the
$\cong$-class of the PES $\E$ given in Example \ref{ex:PES-isom}, 
where PES isomorphism is defined as follows.

\begin{definition}[PES isomorphism]
\label{def:PES-isom}
Let $\E = (E,\leq_E,\sharp_E,l_E)$ and $\F = (F,\leq_F,\sharp_F,l_F)$ be two PESs. We say that
$\E$ and $\F$ are {\em isomorphic}, and write $\E \cong \F$, if there exists a biiection $f: E \rightarrow F$
such that. for every $e,e' \in E$, it holds that :
\begin{itemize}
\item $e \leq_E e'$ if and only if $f(e) \leq_F f(e')$;
\item $e \sharp_E e'$ if and only if $f(e) \sharp_F f(e')$; and
\item $l_E(e) = l_F(f(e))$.
\end{itemize}
\end{definition}

Essentially, PES isomorphism only abstracts away from the set of events. So, for example, any $\F$
isomorphic to the PES $\E$ of Example \ref{ex:PES-isom} must be such that 
$F = \{e'_1,e'_2,e'_3,e'_4\}$ and $\leq/\sharp/l$ are defined as in Example \ref{ex:PES-isom},
but with $e'_i$ in place of $e_i$.

The semantics of a PES $\E$ is defined in terms of the possible states that the system modeled 
by the PES can pass through during its evolution, where such states are defined as follows.

\begin{definition}[Configurations]
A {\em configuration} of a PES $\E=(E,\leq,\sharp,l)$ is any $X \subseteq_{\rm fin} E$ 
such that 
\begin{itemize}
\item $e\ \sharp\ e'$, for every $e, e' \in X$; and
\item $\{e' : e' < e\} \subseteq X$, for every $e \in X$.
\end{itemize}
We denote with $\Conf\E$ the set of all configurations of $\E$.
\end{definition}

Configurations collect the events executed from the outset of the system; so, they
must be finite (they have to represent states reachable in a finite time), 
conflict-free (two conflicting events cannot be executed in the same system evolution)) 
and closed w.r.t. causal predecessors (an event can happen only if all its predecessors happened before).
For examples, the configurations of $\E$ from Example \ref{ex:PES-isom} are
$\emptyset, \{e_1\}, \{e_2\}, \{e_3\}, \{e_1, e_2\}, \{e_3,e_4\}$; notice that
$\{e_4\}$ is not a configuration because $e_4$ cannot stay in any configuration that 
misses its causal predecessor $e_3$,
and that $\{e_1,e_3\}$ is not a configuration because $e_1\,\sharp\,e_3$.

The way in which (the system modeled by) a PES evolves is usually given through 
some {\em labeled transition systems} (LTSs),
on top of which we can build different notions of equivalence between PESs. 
We now recall both the main transition relations and the main equivalences built on top of them.

The first transition relation between configurations states that
$X \red a X'$ whenever $X \subset X'$ and $X' \setminus X = \{e\}$, with $l(e) = a$; 
notation $X \red{}$ (resp., $X \not\!\!\red{}$) means that there exist $a$ and $X'$ (resp., no $a$ and $X'$)
such that $X \red a X'$. Coming back to Example \ref{ex:PES-isom}, we have that the possible transitions
for $\E$ are:
$$
\xymatrix@R=12pt{
&&& \{e_1\} \ar[dr]^b
\\
\{e_3,e_4\} & \ar[l]^b \{e_3\} & \ar[l]^a \emptyset \ar[ur]^a\ar[dr]^b && \{e_1,e_2\}
\\
&&& \{e_2\} \ar[ur]^a
}
$$ 

The two most basic equivalences we shall consider are derived from process algebras and are
{\em bisimulation} and {\em trace equivalence}. To define the latter, we use the notion of
{\em (sequential) trace} of a PES $\E$, that is a sequence $a_1\ldots a_k \in \A^*$ such that
there exist $X_0,\ldots,X_k \in \Conf\E$ such that $X_0 = \emptyset$ and
$X_i \red {a_{i+1}} X_{i+1}$, for every $i = 0,\ldots,k-1$. We denote with $\STr \E$ the set
of the sequential traces of $\E$.

\begin{definition}[Interleaving Trace Equivalence \cite{CSP78}]
$\E \iteq \F$ if $\STr\E = \STr\F$.
\end{definition}

\begin{definition}[Interleaving Bisimulation \cite{Mil89}]
A relation $R \subseteq \Conf\E \times \Conf\F$ is an {\em interleaving bisimulation}
beween $\E$ and $\F$ if
\begin{itemize}
\item $(\emptyset;\emptyset) \in R$;
\item if $(X,Y) \in R$ and $X \red a X'$, then $Y \red a Y'$, for some $Y'$ such that $(X',Y') \in R$; and
\item if $(X,Y) \in R$ and $Y \red a Y'$, then $X \red a X'$, for some $X'$ such that $(X',Y') \in R$.
\end{itemize}
$\E \ibeq \F$ if there is an interleaving bisimulation between $\E$ and $\F$.
\end{definition}

Transitions involving a single action can be generalized to {\em steps}, i.e. sets of events that can be
executed simultaneously. 
Again, for the sake of abstraction, a step transition will be labeled with the multiset of labels associated
to the chosen concurrent events. Formally, we write $X \red A X'$ if $X \subset X'$, $X' \setminus X = G$,
$\forall e, e' \in G. e\ co\ e'$,
and $A$ is the multiset over $\A$ formed by the labels
of the events in $G$. 
For example, for $\E$ in Example  \ref{ex:PES-isom},
we now also have that $\emptyset \red{\{a,b\}} \{e_1,e_2\}$.
This yields the obvious generalization of interleaving bisimulation and trace 
equivalence, where step traces of $\E$, written $\StTr\E$, are defined as expected (i.e., like sequential traces, 
but with steps in place of single events).

\begin{definition}[Step Trace Equivalence \cite{Pomello85}]
$\E \steq \F$ if $\StTr\E = \StTr\F$.
\end{definition}

\begin{definition}[Step Bisimulation \cite{Pomello85}]
A relation $R \subseteq \Conf\E \times \Conf\F$ is a {\em step bisimulation}
beween $\E$ and $\F$ if
\begin{itemize}
\item $(\emptyset;\emptyset) \in R$;
\item if $(X,Y) \in R$ and $X \red A X'$, then $Y \red A Y'$, for some $Y'$ such that $(X',Y') \in R$; and
\item if $(X,Y) \in R$ and $Y \red A Y'$, then $X \red A X'$, for some $X'$ such that $(X',Y') \in R$.
\end{itemize}
$\E \sbeq \F$ if there exists a step bisimulation between $\E$ and $\F$.
\end{definition}

Because of their definition, configurations are actually partially ordered sets (posets, for short), 
where the ordering is given by $\leq$. Indeed, we write $\poset X$ to denote the labeled poset $(X, \leq|_X, l|_X)$, 
where $\leq|_X$ and $l|_X$ are the restrictions of $\leq$ and $l$ to $X$.
A more abstract view of a run is obtained by replacing events with their labels. This turns a poset 
into a partially ordered multiset ({\em pomset}, for short). Formally, the pomset associated to a 
configuration $X$, written $\pom X$, is the isomorphism class of $\poset X$.
We can then observe not just multisets, but multisets together with
their ordering, i.e. pomsets; this generalizes the step semantics because, by observing pomsets,
we are allowed to observe in one single transition also events that are not concurrent.
To this aim, we denote with $\POM \E$ the set of all pomsets of $\E$
and we label a transition with a pomset $p$, where $X \red p X'$ if $X \subset X'$, $X' \setminus X = H$
and $p = \pom H$.
Always referring to $\E$ in Example  \ref{ex:PES-isom},
we also have that $\emptyset \red{a;b} \{e_3,e_4\}$.

\begin{definition}[Pomset Trace Equivalence \cite{BoudolC87}]
$\E \pteq \F$ if $\POM\E = \POM\F$.
\end{definition}

\begin{definition}[Pomset Bisimulation \cite{BoudolC87}]
A relation $R \subseteq \Conf\E \times \Conf\F$ is a {\em pomset bisimulation}
beween $\E$ and $\F$ if
\begin{itemize}
\item $(\emptyset;\emptyset) \in R$;
\item if $(X,Y) \in R$ and $X \red p X'$, then $Y \red p Y'$, for some $Y'$ such that $(X',Y') \in R$; and
\item if $(X,Y) \in R$ and $Y \red p Y'$, then $X \red p X'$, for some $X'$ such that $(X',Y') \in R$.
\end{itemize}
$\E \pbeq \F$ if there exists a pomset bisimulation between $\E$ and $\F$.
\end{definition}

An orthogonal way to generalise the interleaving bisimulation is to keep track of the
causal dependencies and only relate configurations with the same causal history.
This is done by requiring that the two configurations have isomorphic associated posets,
where we also denote poset isomorphism with $\cong$.

\begin{definition}[Weak History Preserving Bisimulation \cite{DDM87}]
A relation $R \subseteq \Conf\E \times \Conf\F$ is a {\em weak history preserving bisimulation}
beween $\E$ and $\F$ if
\begin{itemize}
\item $(\emptyset;\emptyset) \in R$, and
\item if $(X,Y) \in R$ then
\begin{itemize}
\item $\poset X \cong \poset Y$;
\item if $X \red a X'$, then $Y \red a Y'$, for some $Y'$ such that $(X',Y') \in R$;
\item if $Y \red a Y'$, then $X \red a X'$, for some $X'$ such that $(X',Y') \in R$.
\end{itemize}
\end{itemize}
$\E \whbeq \F$ if there exists a weak history preserving bisimulation between $\E$ and $\F$.
\end{definition}

A stronger requirement is that the isomorphism relating $\poset{X'}$ and $\poset{Y'}$ cannot
be arbitrary, but must extend the isomorphism relating $\poset{X}$ and $\poset Y$. This
leads to the following definition.

\begin{definition}[History Preserving Bisimulation \cite{DDM88,RT88}]
A relation $R \subseteq \Conf\E \times \Conf\F \times 2^{\Conf\E \times \Conf\F}$ is a 
{\em history preserving bisimulation} beween $\E$ and $\F$ if
\begin{itemize}
\item $(\emptyset;\emptyset;\emptyset) \in R$, and
\item if $(X,Y,f) \in R$ then
\begin{itemize}
\item $f$ is an isomorphism between $\poset X$ and $\poset Y$;
\item if $X \red a X'$, then $Y \red a Y'$, for some $Y'$ such that $(X',Y',f') \in R$, where
$f'|_X = f$; and
\item if $Y \red a Y'$, then $X \red a X'$, for some $X'$ such that $(X',Y',f') \in R$, where
$f'|_X = f$.
\end{itemize}
\end{itemize}
$\E \hbeq \F$ if there exists a history preserving bisimulation between $\E$ and $\F$.
\end{definition}

The notion of history preserving bisimulation can be finally generalised by also asking for a `backwards'
bisimulation game, along the way of back-and-forth bisimulation \cite{DMV90}.

\begin{definition}[Hereditary History Preserving Bisimulation \cite{bedna01}]
A history preserving bisimulation $R$ beween $\E$ and $\F$ is {\em hereditary} if,
for every $(X,Y,f) \in R$, it holds that
$X' \red a X$ implies $(X',f(X'),f|_{X'}) \in R$
and 
$Y' \red a Y$ implies $(f^{-1}(Y'),Y',f|_{f^{-1}(Y')}) \in R$.

$\E \hhbeq \F$ if there exists a hereditary history preserving bisimulation between $\E$ and $\F$.
\end{definition}

All the equivalences presented so far form a well-known spectrum \cite{Fecher04,GG01}, 
depicted in Figure~\ref{eq-pes} 
(the only inclusions that are not present in \cite{GG01} are $\whbeq\ \subset\ \sbeq$
and $\whbeq\ \subset\ \pteq$, that are proved in \cite{Fecher04}).
There, the term {\em autoconcurrency} means existence of a configuration
containing two different concurrent events with the same label.

\section{Conflict without Causality: Coherence Spaces}

We now consider the first restriction of PESs, obtained by considering an empty causality relation.
This leads to {\em Coherence Spaces} \cite{Girard87}, a model largely studied, e.g., in the field of linear logic
and in the semantics of typed lambda-calculus \cite{BE91,BE91b,Girard87}.

\begin{definition}
A {\em coherence space} (written \CS) over an alphabet $\A$ is a PES $\E$ where the causality relation is empty.
\end{definition}

Thus, we shall usually omit $\leq$ from the definition of a \CS.
In the setting of \CS s, several definitions are radically simplified.
For example, a configuration is simply a finite and conflict-free subset of $E$;
similarly, two events 
are concurrent if they are not in conflict.
Moreover, a step and a pomset are simply multisets and, hence, the two notions do coincide.

Consequently, the spectrum of Figure~\ref{eq-pes} can be simplified, but it is still not trivial.
Indeed, removing the causality relation reduces a complex lattice to a simple chain:
trace equivalences all coincide and represent the coarsest notion; 
they properly include bisimulations (that all coincide, except for $\hhbeq$)
that in turn properly include the back-and-forth variant \cite{DMV90} of $\hbeq$
and the latter is still strictly coarser than isomorphism.
The spectrum is depicted in Figure~\ref{eq-sharp} and it is the first main result of this paper;
the following propositions are needed to establish it.

\begin{proposition}
\label{prop:wh}
For \CS s, if $\E \ibeq \F$ then $\E \hbeq \F$.
\end{proposition}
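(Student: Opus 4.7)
The plan is to exploit the fact that, in a coherence space, the causality relation is trivial (just the identity), so for any configuration $X$ the associated poset $\poset X = (X, \leq|_X, l|_X)$ is a flat labelled set. In particular, \emph{every} label-preserving bijection $f: X \to Y$ is automatically an isomorphism of $\poset X$ and $\poset Y$. This collapses the extra structural requirement of history preserving bisimulation to a purely combinatorial one, which the interleaving bisimulation already carries implicitly.

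Concretely, let $R_{ib} \subseteq \Conf\E \times \Conf\F$ be an interleaving bisimulation witnessing $\E \ibeq \F$. I would first observe that, without loss of generality, every pair in $R_{ib}$ is reachable from $(\emptyset,\emptyset)$ via a sequence of matched single-event moves. A straightforward induction on the length of such a sequence then shows that for every $(X,Y) \in R_{ib}$, the multisets of labels $\{\!\!\{\, l(e) : e \in X \,\}\!\!\}$ and $\{\!\!\{\, l(e') : e' \in Y \,\}\!\!\}$ coincide, so at least one label-preserving bijection between $X$ and $Y$ exists.

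I would then define the candidate HP bisimulation as
$$R_{hp} \;\defi\; \{\,(X, Y, f) \;:\; (X,Y)\in R_{ib} \text{ and } f:X\to Y \text{ is a label-preserving bijection}\,\}.$$
Verification proceeds by checking the three conditions. The base case is immediate since $(\emptyset,\emptyset,\emptyset) \in R_{hp}$. For the isomorphism condition, the trivial causality makes any $f$ in $R_{hp}$ an isomorphism of $\poset X$ and $\poset Y$. For the forward step, if $X \red a X'$ with $X' = X \cup \{e\}$, use $R_{ib}$ to obtain $Y'$ with $Y \red a Y'$ and $(X',Y') \in R_{ib}$; necessarily $Y' = Y \cup \{e'\}$ with $l(e') = a$, so setting $f' \defi f \cup \{(e,e')\}$ yields again a label-preserving bijection satisfying $f'|_X = f$ and $(X',Y',f') \in R_{hp}$. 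The backward step is symmetric.

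There is essentially no obstacle here: the only place one might worry is whether the extension $f'$ is forced to respect some causal dependency introduced by the new events, but this cannot happen in a \CS, since no event is causally related to any other. The heart of the argument is really the observation that, in the absence of causality, ``history'' reduces to a labelled set, and an interleaving bisimulation already matches labels step by step.
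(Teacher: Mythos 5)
Your proposal is correct and follows essentially the same route as the paper: both build the history preserving bisimulation by pairing each $(X,Y)$ in the interleaving bisimulation with every label-preserving bijection (equivalently, poset isomorphism, since causality is empty) and extend the bijection by $\{(e,e')\}$ at each matched step. Your preliminary observation about reachable pairs and equal label multisets is harmless but not needed, since the bisimulation game starting from $(\emptyset,\emptyset,\emptyset)$ constructs the required bijections incrementally anyway.
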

\begin{proof}
Let $R$ be an interleaving bisimulation between $\E$ and $\F$, and consider the
following relation:
$$
R' \define \bigcup_{(X,Y) \in R} \{(X,Y,f)\ :\ f \mbox{ is an isomorphism between }
X \mbox{ and } Y\}
$$
Trivially, $(\emptyset,\emptyset, \emptyset) \in R'$, because $(\emptyset,\emptyset) \in R$ 
and every set is isomorphic to itself. Let $(X,Y,f) \in R'$; by construction, $f:X \rightarrow Y$
is a bijection such that $l(x) = l(f(x))$, for every $x \in X$.\footnote{
	Indeed, notice that, for \CS s, the poset associated to a configuration is just a collection
	of (labeled) events (i.e., the ordering relation is empty) and, hence, poset isomorphism has
	only to respect the labeling.	
} Now, let $X \red a X'$; this means that $X' = X \uplus \{e\}$ and $l(e) = a$. Since
$(X,Y) \in R$, there exists $Y'$ such that $Y \red a Y' = Y \uplus \{e'\}$, where $l(e') = a$, and $(X',Y') \in R$.
It is easy to see that $f' = f \cup \{(e,e')\}$ is an isomorphism between $X'$ and $Y'$
and so $(X',Y',f') \in R$.
\end{proof}

%

\begin{proposition}
\label{prop:ipt}
For \CS s, if $\E \iteq \F$ then $\E \pteq \F$.
\end{proposition}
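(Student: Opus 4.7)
The plan is to exploit two simplifications in the CS setting. First, since causality is empty, the poset $\poset X$ associated with any configuration $X$ is an antichain, so the pomset $\pom X$ collapses to the multiset of labels $\{\!\!\{ l(e) : e \in X \}\!\!\}$. Second, any subset of a configuration of a CS is again a configuration (a subset of a conflict-free set is conflict-free), so in $\E$ the pomset-labeled transitions are exactly $\emptyset \red{\pom X} X$ for $X \in \Conf\E$ (and, more generally, $Y \red{\pom{X\setminus Y}} X$ for any $Y \subseteq X$ with $X \in \Conf\E$). In particular, $\POM\E = \{\pom X : X \in \Conf\E\}$, and similarly for $\F$.

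With these observations, I would prove $\POM\E \subseteq \POM\F$, and conclude by symmetry. Take $p \in \POM\E$, so $p = \pom X$ for some $X = \{e_1,\ldots,e_k\} \in \Conf\E$. Because causality is empty, every total order on $X$ is compatible with $\leq$, and any prefix $\{e_1,\ldots,e_i\}$ is conflict-free (being a subset of $X$). Hence
\[
\emptyset \red{l(e_1)} \{e_1\} \red{l(e_2)} \{e_1,e_2\} \red{l(e_3)} \cdots \red{l(e_k)} X
\]
is a valid run in $\E$, so $l(e_1)\,l(e_2)\cdots l(e_k) \in \STr\E$. By the hypothesis $\E \iteq \F$, the same sequence lies in $\STr\F$, i.e.\ there exist $Y_0 = \emptyset, Y_1, \ldots, Y_k \in \Conf\F$ and events $e'_i \in Y_i \setminus Y_{i-1}$ with $l(e'_i) = l(e_i)$. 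Then $Y_k$ is a configuration of $\F$ whose multiset of labels is exactly $\{\!\!\{ l(e_1),\ldots,l(e_k) \}\!\!\}$, which, by the first observation above, means $\pom{Y_k} = \pom X = p$. Hence $p \in \POM\F$.

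The argument is essentially routine once the two observations are made; the only mildly subtle point is that the pomset-transition relation in a CS is definable purely in terms of configurations (because subsets of configurations are configurations), which is what lets a single sequential enumeration suffice to witness an arbitrary pomset. No autoconcurrency or labeling caveats arise, since only the multiset of labels is ever compared.
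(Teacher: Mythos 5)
Your proof is correct and follows essentially the same route as the paper's: linearize a configuration (resp.\ step) into a sequential trace, transfer it via $\iteq$, and use the fact that any configuration of a \CS{} is conflict-free---hence a set of pairwise concurrent events whose pomset is just its label multiset---to recover the target pomset. The only cosmetic difference is that the paper first reduces $\pteq$ to $\steq$ (since pomsets and steps coincide in \CS s) and then regroups the sequential trace into steps, whereas you argue directly on $\POM\E \subseteq \POM\F$; the key observations are identical.
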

\begin{proof}
Since a pomset is just a multiset (i.e., a step), it suffices to prove that $\E \iteq \F$ implies $\E \steq \F$.
Let $A_1\ldots A_k \in \StTr \E$; we have to show that $A_1\ldots A_k \in \StTr \F$.

By definition, there exist $X_0,\ldots,X_k \in \Conf\E$ such that $X_0 = \emptyset$ and
$X_{i-1} \red {A_i} X_i$, for every $i = 1,\ldots,k$. This means that 
$X_{i-1} \subset X_i$, $X_i \setminus X_{i-1} = \{e^i_1,\ldots,e^i_{j_i}\}$,
$\forall h \neq q.\ e^i_h\ co\ e^i_q$ and $A_i$ is the multiset formed by $l(e^i_1),\ldots,l(e^i_{j_i})$.
Thus, $X_{i-1} \red {l(e^i_1)} \ldots \red {l(e^i_{j_i})} X_i$ and so
$l(e^1_1) \ldots l(e^1_{j_1}) \ldots l(e^k_1) \ldots l(e^k_{j_k}) \in \STr\E$.
By hypothesis, $l(e^1_1) \ldots l(e^1_{j_1}) \ldots l(e^k_1) \ldots l(e^k_{j_k}) \in \STr\F$;
i.e., there exist $Y_0,\ldots,Y_{j_1+\ldots+j_k} \in \Conf\F$ such that
$Y_0 = \emptyset$ and $Y_0 \red{l(e^1_1)} Y_1 \ldots \red{l(e^k_{j_k})} Y_{j_1+\ldots+j_k}$.
Since $Y_{j_1+\ldots+j_k}$ is a configuration and configurations in \CS s are
conflict-free sets, we have that all the events occurring in it are concurrent.
Thus, we can group single transitions into steps and obtain $A_1\ldots A_k \in \StTr \F$.
\end{proof}

%
%
%
%
%
%

\begin{proposition}
\label{prop:u-t-b}
There exist \CS s $\E$ and $\F$ such that $\E \iteq \F$ but $\E \not\ibeq \F$.
\end{proposition}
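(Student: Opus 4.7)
The plan is to exhibit concrete coherence spaces $\E$ and $\F$ witnessing the separation, since this is a ``there exist'' statement. The naive tiny candidates ($a+a$ vs.\ $a$, or $a \parallel b$ vs.\ $a+b$) either turn out to be bisimilar or fail trace equivalence, so the witness needs a slightly richer structure: two $a$-labelled events in conflict with each other, one of which \emph{additionally} blocks a subsequent $b$, while the other does not.

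Concretely, I take $\E$ to have three events $e_1, e_2, e_3$ with $l(e_1) = l(e_2) = a$ and $l(e_3) = b$, and with (the symmetric closure of) conflicts $e_1 \sharp e_2$ and $e_1 \sharp e_3$; and $\F$ to have two events $f_1, f_2$ with $l(f_1) = a$, $l(f_2) = b$ and empty conflict relation. The first step is to enumerate conflict-free finite subsets to obtain the configurations: for $\E$ one gets $\emptyset, \{e_1\}, \{e_2\}, \{e_3\}, \{e_2, e_3\}$, and for $\F$ all four subsets of $\{f_1, f_2\}$. A direct inspection then yields $\STr\E = \STr\F = \{\epsilon, a, b, ab, ba\}$, establishing $\E \iteq \F$.

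The second step is to rule out any interleaving bisimulation $R$ between $\E$ and $\F$. The argument is essentially forced: from $(\emptyset,\emptyset) \in R$ and the $\E$-move $\emptyset \red{a} \{e_1\}$, the only possible match in $\F$ is $\emptyset \red{a} \{f_1\}$, so $(\{e_1\}, \{f_1\})$ would have to lie in $R$. But then the $\F$-move $\{f_1\} \red{b} \{f_1, f_2\}$ would need a match in $\E$, which is impossible because $e_1$ is in conflict with the only $b$-labelled event $e_3$; hence $\{e_1\}$ has no outgoing transitions at all, yielding the desired contradiction.

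No serious obstacle is expected here: the conceptual subtlety is merely to realise that, in the absence of causality, bisimulation separates from traces precisely when a single label is realised by distinct events whose ``conflict futures'' differ. Once the witness is fixed, both verifications reduce to brief case analyses on the (very few) reachable configurations.
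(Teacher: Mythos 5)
Your proof is correct and takes essentially the same route as the paper, which uses the witness $\E = a + (a\parallel a)$ and $\F = a\parallel a$: in both cases trace equivalence holds while one $a$-labelled event of $\E$ is left stuck by conflict, breaking the bisimulation game. The only noteworthy difference is that the paper's example uses a single label for all events, a feature it exploits later to conclude that the labeling function has no impact on the spectrum for \CS s, whereas your witness uses two labels.
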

\begin{proof}
Consider
$\E = a + (a\parallel a)$ and $\F = a\parallel a$: they have the same traces (viz., $\{\epsilon,a,aa\}$)
but $\E$, after the leftmost $a$, is stuck, whereas $\F$, after every $a$, is not.
\end{proof}

\begin{proposition}
\label{prop:u-h-hh}
There exist \CS s $\E$ and $\F$ such that $\E \hbeq \F$ but $\E \not\hhbeq \F$.
\end{proposition}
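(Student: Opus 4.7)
The plan is to exhibit two specific coherence spaces $\E$ and $\F$ that witness $\E \hbeq \F$ but $\E \not\hhbeq \F$. The key structural observation guiding the choice of example is that, by Prop.~\ref{prop:wh}, hp-bisimilarity of CSes is essentially interleaving bisimilarity: every label-preserving bijection between two configurations is automatically an isomorphism of their (antichain) posets, so any interleaving bisimulation can be enriched into an hp-bisimulation by choosing bijections pair-by-pair. The additional strength of $\hhbeq$ comes entirely from the hereditary (backward) closure, which forces a global consistency of event-pairings across different branches of the computation. Hence the counterexample should consist of two CSes whose interleaving-transition graphs are bisimilar but whose conflict structures differ in a way invisible to the forward game alone.

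To verify $\E \hbeq \F$, I would construct an interleaving bisimulation by direct inspection of the transition relation on configurations, and then lift it via the construction in the proof of Prop.~\ref{prop:wh}: equip each related pair $(X,Y)$ with any label-preserving bijection $f:X\to Y$, exploiting that in a CS this is automatically a poset isomorphism. To show $\E \not\hhbeq \F$, I would argue by contradiction: assume $R$ is an hhp-bisim, follow a specific forward run in $\E$ matched by a run in $\F$ so as to force some triple $(X,Y,f) \in R$, and then invoke the hereditary condition to extract $(X', f(X'), f|_{X'}) \in R$ for a carefully chosen $X' \subsetneq X$. From this restricted triple, I would then identify a forward $\E$-transition $X' \red a X' \cup \{e\}$ whose every matching $\F$-transition would require extending $f|_{X'}$ with a pair $(e,e')$ for which no viable $e'$ exists---either because every $\F$-event with label $\ell(e)$ consistent with the conflict structure of $\F$ is already in the range of $f|_{X'}$, or because the only $e'$ not yet committed conflicts with some event already paired in $f|_{X'}$, so that $f(X') \cup \{e'\}$ is not a configuration of $\F$.

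The main obstacle is pinpointing both the example and the specific backward-forward-forward witness path: because CSes lack causality, their configuration posets are very symmetric, and many natural small candidates turn out to be hhp-bisim---or even isomorphic---once one exploits the freedom to rebind events across different branches. A successful witness must arrange the conflict relation of $\F$ so that the hereditary closure of $R$ eventually commits enough event-pairings that some later forward extension in $\E$ has no compatible $\F$-counterpart, something which the hp game can always avoid by resetting its bijection at each match.
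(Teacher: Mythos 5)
Your proposal correctly identifies the shape of the argument --- in a \CS, the forward hp game collapses to interleaving bisimulation (by Prop.~\ref{prop:wh}, any label-preserving bijection between configurations is automatically a poset isomorphism, so bijections can be rebound at every match), and only the hereditary/backward closure can tell the two structures apart. This is indeed the mechanism the paper exploits. However, the proposition is an existence claim, so the entire mathematical content of the proof is the witness pair, and your proposal never produces one: it describes how you \emph{would} verify $\hbeq$ and refute $\hhbeq$ for a hypothetical example, and then explicitly concedes that ``the main obstacle is pinpointing both the example and the specific backward-forward-forward witness path.'' As it stands, nothing has been proved; a strategy for finding a counterexample is not a counterexample, especially since (as you yourself note) most small candidates collapse to $\hhbeq$ or even to isomorphism.

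For comparison, the paper's witness is $\E = a\parallel (a + (a\parallel a))$ and $\F = (a\parallel (a + (a\parallel a))) + (a\parallel a)$. The extra summand $a\parallel a$ of $\F$ (events $e_5,e_6$) is forward-matched in the hp game by pairing $\{e_5,e_6\}$ with $\{e_1,e_2\}$ of $\E$, where $e_2$ is the ``dead-end'' $a$ of the inner choice; the unique hp-bisimulation acts as the identity elsewhere. Heredity then fails precisely along the path you sketched in the abstract: backtracking from $\{e_1,e_2\}$ to $\{e_1\}$ reaches a configuration of $\E$ that can still perform two $a$'s in sequence, whereas backtracking from $\{e_5,e_6\}$ to either singleton reaches a configuration of $\F$ that can perform only one further $a$. (Note the failure here is detected by a mismatch in forward behaviour after the backward move, rather than by the bijection-extension obstruction you describe; both are legitimate ways heredity can fail, but your proof is incomplete until a concrete pair realizing one of them is written down and checked.)
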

\begin{proof}
Consider
$$
\E = a\parallel (a + (a\parallel a)) 
\qquad\text{ and } \qquad
\F = (a\parallel (a + (a\parallel a))) + (a\parallel a)
$$ 
and their LTSs (the events have been numbered in increasing order, from left to right, both in $\E$ and in $\F$):
$$
\xymatrix@R=14pt@C=10pt{
&& \{e_1, e_3, e_4\} &
\\
& \{e_1, e_3\} \ar[ur] & \{e_1, e_4\} \ar[u] & \{e_3, e_4\} \ar[ul]
\\
& \{e_1\} \ar[u] \ar[ur]  \ar[dl] & \{e_3\} \ar[ur] \ar[ul] & \{e_4\}  \ar[ul] \ar[u]
\\
\{e_1, e_2\} && \emptyset \ar[ur] \ar[ul] \ar[u] \ar[dl] \ar@{-->}[d] \ar@{-->}[r] & \{e_6\}  \ar@{-->}[d]
\\
& \{e_2\}  \ar[ul] & \{e_5\}  \ar@{-->}[r] & \{e_5, e_6\}
}
$$
Here, states are configurations, arrows are $a$-labeled transitions and the LTS for $\E$
is the solid part, whereas the LTS for $\F$ also includes the dashed part.

The only possible history preserving bisimulation between $\E$ and $\F$ is the one that acts as the identity on the
common configurations and that associates $\{e_5, e_6\}$ with $\{e_1, e_2\}$ and both $\{e_5\}$ 
and $\{e_6\}$ with $\{e_2\}$. 
However, it is not hereditary because from $\{e_1, e_2\}$ we can backtrack to $\{e_1\}$
and from here we can perform two $a$'s in sequence; by contrast, every backtrack from $\{e_5, e_6\}$
leads to a configuration that can only perform one single $a$.
\end{proof}

\begin{proposition}
\label{prop:hhp-isom}
There exist \CS s $\E$ and $\F$ such that $\E \hhbeq \F$ but $\E \not\cong \F$.
\end{proposition}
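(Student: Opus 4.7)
The plan is to exhibit two small CSs that are HHP-bisimilar but have different numbers of events, and therefore cannot be isomorphic. The natural candidate is $\E = a + a$, with two conflicting events $e_1, e_2$ both labeled $a$, versus $\F = a$, with a single event $e'_1$ labeled $a$. Since PES isomorphism requires a bijection on event sets and $|E| = 2 \neq 1 = |F|$, we have $\E \not\cong \F$ immediately.

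For the HHP-bisimulation, I would take
$$R = \{(\emptyset, \emptyset, \emptyset)\} \cup \{(\{e_i\}, \{e'_1\}, \{(e_i, e'_1)\}) : i = 1, 2\}.$$
First I would observe that each $f$ in the third component is a valid poset isomorphism: in a CS, $\poset X$ has empty ordering, so the requirement reduces to being a label-preserving bijection, which holds by construction. Then I would verify the forward clauses: from $(\emptyset, \emptyset, \emptyset)$, each $\E$-move to $\{e_i\}$ is matched by $\F$'s unique move to $\{e'_1\}$, the extension of $\emptyset$ by $(e_i, e'_1)$ lies in $R$, and conversely $\F$'s move to $\{e'_1\}$ is matched by, say, $\E$ going to $\{e_1\}$; from any $(\{e_i\}, \{e'_1\}, f) \in R$ neither side can move further, because in $\E$ the remaining event conflicts with $e_i$ and in $\F$ there are no other events.

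The hereditary requirement is immediate: the only backward $a$-step from $\{e_i\}$ (respectively $\{e'_1\}$) lands in $\emptyset$, and the corresponding restriction of $f$ is the empty function, so we only need $(\emptyset, \emptyset, \emptyset) \in R$, which holds. I do not foresee any real obstacle, the example being minimal by design; the conceptual point is that in a CS, several pairwise-conflicting events with the same label are indistinguishable under any configuration-based observation, even under the strongest back-and-forth pomset-isomorphism game, because only one of them can ever appear in a run and all look identical as singleton posets.
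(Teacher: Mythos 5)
Your proposal is correct and uses exactly the same pair of structures as the paper ($a$ versus $a+a$), which the paper simply cites from the literature as an example already consisting of two CSs; you merely spell out the hereditary history preserving bisimulation and the cardinality argument explicitly. No issues.
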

\begin{proof}
The example given in \cite{bedna01} for proving a similar claim 
(viz., $\E = a$ and $\F = a+a$) is in fact made up from two CSs.
\end{proof}

Quite surprisingly, the proofs of Propositions~\ref{prop:wh} and \ref{prop:ipt} do not rely on the fact that
labels are different or not, and the examples provided in Propositions~\ref{prop:u-t-b}, \ref{prop:u-h-hh}
and~\ref{prop:hhp-isom} are built on \CS s where all events have the same label.
Hence, in the setting of \CS s, the labeling function has no impact on
the spectrum of Figure~\ref{eq-sharp}.

%

\section{Causality without Conflict: Elementary ESs}

A second restriction of PESs is obtained by considering an empty conflict relation;
this yields {\em Elementary Event Structures} \cite{NielsenPW81}.

\begin{definition}
An {\em elementary event structure} (written EES) over an alphabet $\A$ is a 
PES $\E$ where the conflict relation is empty.
\end{definition}

Consequently, we shall omit $\sharp$ from the definition of an EES.
EESs are a particular kind of directed acyclic graphs, where
every path from $u$ to $v$ entails the existence of a directed edge $(u,v)$;
this comes from the fact that causality is transitive. For the sake of simplicity, we shall sometimes
represent EESs with the transitive reduction\footnote{
	The \emph{transitive reduction} 
	of a DAG $D$ is the (unique) smallest DAG $D'$ which preserves the reachability relation of $D$. 
	Note that two transitively reduced DAGs are isomorphic if and only if
	their transitive closures are isomorphic.} 
of their causality relation.
For example,
$$
\xymatrix@R=12pt{
c
\\
b \ar[u]& b\ar[ul]
\\
a \ar[u]\ar[ur] & a \ar[u]
}
$$
represents the (isomorphism class of the) EES $\E=(E,\leq,l)$, where $E=\{e_1,e_2,e_3,e_4,e_5\}$,
$l(e_1)=l(e_2)=a$, $l(e_3)=l(e_4)=b$, $l(e_5)=c$, $e_i \leq e_i$, $e_1 \leq e_3$, $e_1 \leq e_4$, $e_1 \leq e_5$, 
$e_2 \leq e_4$, $e_2 \leq e_5$, $e_3 \leq e_5$ and $e_4 \leq e_5$.

We now present the results needed to adapt the spectrum of Figure~\ref{eq-pes}
to EESs; this is the second main contribution of our work.
Surprisingly, the spectrum changes according to whether the set of events is finite or not.
However, there are a few common results, that we now present.

For interleaving and step equivalences, the spectrum for EESs is the same as
that for PESs: the inclusions depicted in the upper part of Figure~\ref{eq-pes} also hold for EESs; 
what changes are the counterexamples needed to distinguish them. We now provide the
distinguishing examples in the framework of EESs.

\begin{proposition}
\label{prop:i-s}
For EESs, there exist $\E$ and $\F$ such that
$\E \ibeq \F$ and $\E \iteq \F$, whereas $\E \not\sbeq \F$ and $\E \not\steq \F$.
\end{proposition}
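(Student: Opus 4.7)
The canonical concurrency-vs-sequentiality pair does the job. My plan is to take $\E = a \parallel a$, the EES with two concurrent events $e_1, e_2$ both labelled $a$, and $\F = a;a$, the EES with two events $f_1 < f_2$ both labelled $a$, and then check the four assertions in turn.

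For the trace and interleaving bisimulation equivalences I would enumerate directly: $\Conf\E = \{\emptyset, \{e_1\}, \{e_2\}, \{e_1, e_2\}\}$ and $\Conf\F = \{\emptyset, \{f_1\}, \{f_1, f_2\}\}$, each $a$-transition adds exactly one event, so $\STr\E = \STr\F = \{\epsilon, a, aa\}$ and hence $\E \iteq \F$. The relation
$$R = \{(\emptyset,\emptyset),\ (\{e_1\},\{f_1\}),\ (\{e_2\},\{f_1\}),\ (\{e_1,e_2\},\{f_1,f_2\})\}$$
witnesses $\E \ibeq \F$: from $\emptyset$ both $\E$-moves to $\{e_1\}$ and $\{e_2\}$ are matched by the unique $\F$-move $\emptyset \red a \{f_1\}$; from each $\{e_i\}$ the only $a$-move to $\{e_1,e_2\}$ is matched by $\{f_1\} \red a \{f_1,f_2\}$; and the terminal pair is stuck on both sides.

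To separate at the step level I would observe that $e_1\ co\ e_2$ in $\E$ (no causality, no conflict), so $\emptyset \red{\{a,a\}} \{e_1,e_2\}$ is a legal step transition and thus the multiset $\{a,a\} \in \StTr\E$. In $\F$, the inequality $f_1 < f_2$ prevents $f_1\ co\ f_2$, so no step transition out of any configuration of $\F$ groups both events together; hence $\{a,a\} \notin \StTr\F$, which gives $\E \not\steq \F$. Finally, $\E \not\sbeq \F$ is immediate from the inclusion $\sbeq\ \subset\ \steq$ already recorded in the spectrum of Figure~\ref{eq-pes}.

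I foresee no real obstacle: once the example is chosen each verification is a short enumeration. The one point deserving a second glance is that $R$ maps two distinct $\E$-configurations $\{e_1\}$ and $\{e_2\}$ to the same $\F$-configuration $\{f_1\}$, but this is harmless because $\{e_1\}$ and $\{e_2\}$ have identical outgoing behaviour in $\E$, so the bisimulation clause is met on both sides.
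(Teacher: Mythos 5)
Your proposal is correct and uses exactly the same pair of EESs as the paper ($a;a$ versus $a\parallel a$, merely written in the opposite order); the paper simply states the example and leaves the verification implicit, whereas you spell out the traces, the interleaving bisimulation, and the step-level separation, all of which check out.
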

\begin{proof}
Consider the EESs $\E = a;a$ and $\F = a \parallel a$.
\end{proof}

\begin{proposition}
\label{prop:it-ib}
For EESs, there exist $\E$ and $\F$ such that
$\E \iteq \F$, whereas $\E \not\ibeq \F$. 
\end{proposition}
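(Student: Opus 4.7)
The plan is to produce a small finite counterexample. Since EESs have no conflict, the usual branching examples such as $a;(b+c)$ versus $(a;b)+(a;c)$ are unavailable, so I would instead exploit autoconcurrency and causal asymmetry with repeated labels. Specifically, I would take $\E = (a;b) \parallel a$, realised by three events $e_1,e_2,e_3$ with $e_1 < e_2$, labels $l(e_1)=l(e_3)=a$ and $l(e_2)=b$; and the ``dual'' EES $\F = a;(b \parallel a)$, realised by three events $f_1,f_2,f_3$ with $f_1 < f_2$ and $f_1 < f_3$, labels $l(f_1)=l(f_3)=a$ and $l(f_2)=b$.

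First I would show $\E \iteq \F$ by a direct enumeration. The configurations of $\E$ are $\emptyset,\{e_1\},\{e_3\},\{e_1,e_2\},\{e_1,e_3\},\{e_1,e_2,e_3\}$, and those of $\F$ are $\emptyset,\{f_1\},\{f_1,f_2\},\{f_1,f_3\},\{f_1,f_2,f_3\}$. Reading off the outgoing single-event transitions and closing under concatenation, both $\STr\E$ and $\STr\F$ equal the set $\{\epsilon,a,aa,ab,aab,aba\}$. This is a routine finite check.

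Then I would rule out $\E \ibeq \F$ by contradiction. Suppose $R$ is an interleaving bisimulation; then $(\emptyset,\emptyset)\in R$. The transition $\emptyset \red{a} \{e_3\}$ of $\E$ must be matched in $\F$, but the only $a$-transition available from $\emptyset$ in $\F$ is $\emptyset \red{a} \{f_1\}$, forcing $(\{e_3\},\{f_1\})\in R$. Now $\{f_1\} \red{b} \{f_1,f_2\}$ in $\F$, whereas in $\E$ the configuration $\{e_3\}$ has only the outgoing transition $\{e_3\} \red{a} \{e_1,e_3\}$ (event $e_2$ still requires its causal predecessor $e_1$), so no $b$-transition matches. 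This contradicts the bisimulation clause for $\F$, and therefore no such $R$ exists.

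The main obstacle is conceptual rather than technical: without conflict, there is no choice operator to create the usual trace/bisimulation discrepancy, so the example must instead exploit the fact that in $\E$ one of the two initial $a$-events (namely $e_3$) fails to enable the $b$-event $e_2$, while every initial $a$-transition of $\F$ necessarily enables the $b$-event $f_2$. Repeating the label $a$ on two incomparable events ``masks'' this causal asymmetry at the level of sequential traces while preserving it at the level of the bisimulation game.
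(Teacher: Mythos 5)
Your proof is correct: the pair $\E = (a;b)\parallel a$ and $\F = a;(b\parallel a)$ does have $\STr\E = \STr\F = \{\epsilon,a,aa,ab,aab,aba\}$, and the bisimulation game fails exactly as you describe, since $\F$'s only response to $\emptyset \red{a} \{e_3\}$ is $\emptyset \red{a} \{f_1\}$, after which $\{f_1\}\red{b}\{f_1,f_2\}$ cannot be matched from $\{e_3\}$. The overall strategy is the same as the paper's (exhibit a finite witness pair and check both halves directly), but your witness differs: the paper uses the four-event pair $\E = (a\parallel b);(a\parallel b)$ and $\F = (a;b)\parallel(b;a)$, where each of the two labels occurs twice, whereas yours is a three-event pair that hides the causal asymmetry behind autoconcurrency on the single label $a$. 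Yours is smaller and arguably cleaner to verify; note also that it still uses two distinct labels, which is unavoidable, since the paper later shows that for EESs over a one-letter alphabet $\iteq$ and $\ibeq$ coincide. Your opening remark that the classical example $a;(b+c)$ vs.\ $(a;b)+(a;c)$ is unavailable without conflict, forcing the use of repeated labels, is exactly the right diagnosis.
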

\begin{proof}
Consider the EESs $\E = (a \parallel b);(a \parallel b)$ and $\F = (a;b) \parallel (b;a)$.
Trivially, $\E \iteq \F$, since $\STr\E = \STr\F = \{\epsilon, a, b, ab, ba, aba, abb,$ $ baa, bab, abab, abba, baab, baba\}$.
By contrast $\E \not\ibeq \F$, since in $\F$ we can reach, after executing the leftmost $a$ and $b$, a state 
where only $b$ is possible, whereas in $\E$, after every $a$ and $b$, 
both $a$ and $b$ are always enabled.
\end{proof}

\begin{proposition}
\label{prop:st-i}
For EESs, there exist $\E$ and $\F$ such that
$\E \steq \F$ whereas $\E \not\ibeq \F$.
\end{proposition}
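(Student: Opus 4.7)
The plan is to exhibit two small EESs over the same event set that are step-trace equivalent but distinguishable by interleaving bisimulation. I will take four events $e_1, e_2, e_3, e_4$ with labels $l(e_1)=l(e_2)=a$ and $l(e_3)=l(e_4)=b$, and define $\E$ to be the ``N-shape'' EES with causality $e_1 < e_3$, $e_1 < e_4$, and $e_2 < e_4$ (the diagonal edge $e_1 < e_4$ makes $\E$ the non-series-parallel pomset alluded to in the footnote, since no decomposition via $+, ;, \parallel$ is possible), and $\F$ to be the parallel composition $(a;b) \parallel (a;b)$, i.e.\ the same events and labels with causality only $e_1 < e_3$ and $e_2 < e_4$.

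For the step-trace equivalence $\E \steq \F$, I would enumerate the configurations of each EES and, for every step-trace prefix $w$, compare the union over all configurations reached by $w$ of the next-step label multisets in $\E$ and in $\F$. The only case where the two structures locally disagree is $w = \{a\}$: in $\E$, the two reachable configurations $\{e_1\}$ and $\{e_2\}$ admit next-step options $\{\{a\}, \{b\}, \{a,b\}\}$ and $\{\{a\}\}$ respectively, while in $\F$ both $\{e_1\}$ and $\{e_2\}$ admit $\{\{a\}, \{b\}, \{a,b\}\}$; in both structures the union of next-step multisets is $\{\{a\}, \{b\}, \{a,b\}\}$, so the languages agree after this prefix. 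All other cases are easy, since once both $a$-events have fired the remaining residual EES coincides in $\E$ and $\F$.

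For $\E \not\ibeq \F$, I would consider the transition $\emptyset \red a \{e_2\}$ in $\E$ and argue it cannot be matched in any bisimulation. In $\E$, the configuration $\{e_2\}$ enables only $e_1$ (since $e_3$ requires $e_1$ and $e_4$ requires both $e_1$ and $e_2$), so no $b$-step is available from it. By contrast, every $a$-successor of $\emptyset$ in $\F$ enables a $b$-step: $\{e_1\}$ via $e_3$ and $\{e_2\}$ via $e_4$. Hence $\E$'s $\{e_2\}$ cannot be $R$-related to any $\F$-state reachable by a single $a$, which rules out an interleaving bisimulation. The main obstacle will be the step-trace case analysis: one must observe that the asymmetric ``poverty'' of $\E$'s $\{e_2\}$ at the next-step level is compensated by $\E$'s $\{e_1\}$, whose full menu of next steps matches exactly what $\F$ produces under the same step-trace prefix---this compensation is precisely what the trace semantics conflates and what the bisimulation semantics distinguishes.
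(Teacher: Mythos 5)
Your proposal is correct and uses essentially the same counterexample as the paper: the ``N-shaped'' EES with causality $e_1<e_3$, $e_1<e_4$, $e_2<e_4$ versus $(a;b)\parallel(a;b)$, with the same distinguishing observation that the configuration $\{e_2\}$ of $\E$, reached by an $a$, enables no $b$, while every $a$-successor of $\emptyset$ in $\F$ does. The paper verifies $\steq$ by exhibiting the two step LTSs rather than by your prefix-by-prefix union-of-next-steps argument, but both amount to the same finite check.
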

\begin{proof}
Consider the EESs
$$
\begin{array}{cccc}
\E \ = &
\xymatrix@R=12pt{
b & b
\\
a \ar[u]\ar[ur] & a \ar[u]
}
&
\qquad \F \ = &
\xymatrix@R=12pt{
b & b
\\
a \ar[u] & a \ar[u]
}
\end{array}
$$
The step LTSs resulting from these EESs (where states are configurations and arrows represent
transitions) are:
$$
\begin{array}{cc}
\xymatrix@R=20pt{
&& \ar[dl]|{a} \ar[dr]|a \ar[dd]|{aa}
\\
& \ar[dl]|b \ar[dr]|a \ar[dd]|{ab}
&&
\bullet \ar[dl]|a
\\
\ar[dr]|a
&&
\ar[dl]|b \ar[dr]|b \ar[dd]|{bb}
\\
& \ar[dr]|b
&&
 \ar[dl]|b
\\
&&
}
&
\qquad
\xymatrix@R=20pt{
&& \ar[dl]|a \ar[dr]|a \ar[dd]|{aa}
\\
& \ar[dl]|b \ar[dr]|a \ar[dd]|{ab}
&&
\ar[dl]|a \ar[dr]|b \ar[dd]|{ab}
\\
\ar[dr]|a
&&
\ar[dl]|b \ar[dr]|b \ar[dd]|{bb}
&&
\ar[dl]|a
\\
& \ar[dr]|b
&&
 \ar[dl]|b
\\
&&
}
\end{array}
$$
From them, checking $\steq$ is immediate.
On the other hand, $\ibeq$ does not hold because
there exists a configuration in the left-hand side LTS (marked with `$\bullet$')
reachable after an $a$ that cannot perform a $b$; by contrast, every configuration reachable
after an $a$ in the right-hand side LTS can always perform a $b$.
\end{proof}

An easy corollary of the previous result is that, for EESs, $\steq$ is not contained
in $\pteq$ and $\sbeq$.

\begin{proposition}
\label{prop:s-hh}
For EESs, there exist $\E$ and $\F$ such that
$\E \sbeq \F$ whereas $\E \not\whbeq \F$ and $\E \not\pteq \F$ .
\end{proposition}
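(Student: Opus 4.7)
The plan is to exhibit two infinite EESs that are indistinguishable at the step level but differ in their causal observations. Define $\E$ as the EES with event set $\{e_i : i \geq 0\}$, empty causality relation, and every event labelled $a$; define $\F$ as the EES with event set $\{f_i : i \geq 0\}$, every event labelled $a$, and the single non-trivial causal dependency $f_0 < f_1$. Both are clearly EESs by construction.

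First I would show $\E \sbeq \F$ via the relation $R = \{(X,Y) \in \Conf\E \times \Conf\F : |X| = |Y|\}$. Trivially $(\emptyset,\emptyset) \in R$. Given $(X,Y) \in R$ with $|X|=|Y|=n$ and a step $X \red{\{a^k\}} X'$ of $\E$, any $k$-element subset of $E \setminus X$ witnesses such a step because $\E$ has empty causality; on the $\F$-side, the set of events enabled at $Y$ and pairwise concurrent to each other is always an infinite antichain (namely $(F \setminus Y) \setminus \{f_1\}$ when $f_0 \notin Y$ and $F \setminus Y$ otherwise), so a $k$-sized concurrent extension $Y'$ exists with $|Y'|=n+k$, yielding $(X', Y') \in R$. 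The symmetric argument handles $\F$'s moves.

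For $\E \not\pteq \F$ I would exhibit the pomset transition $\emptyset \red{p} \{f_0,f_1\}$ of $\F$, where $p$ is the two-point $a$-$a$ chain. Since $\E$ has empty causality, every difference $X'\setminus X$ between two of its configurations is an antichain, and hence no chain pomset can occur in $\POM\E$; thus $p \in \POM\F\setminus\POM\E$.

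For $\E \not\whbeq \F$ I would argue by contradiction. Any putative weak history preserving bisimulation $R'$ between $\E$ and $\F$ must, starting from $(\emptyset,\emptyset)$, respond to $\F$'s moves $\emptyset \red{a} \{f_0\} \red{a} \{f_0,f_1\}$ by some $\E$-moves $\emptyset \red{a} \{e_i\} \red{a} \{e_i,e_j\}$, forcing $(\{e_i,e_j\},\{f_0,f_1\}) \in R'$; however $\poset{\{e_i,e_j\}}$ is a two-point antichain whereas $\poset{\{f_0,f_1\}}$ is a two-point chain, so these posets are not isomorphic and the defining condition of $\whbeq$ fails.

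The main delicate point is the step bisimulation verification, and in particular arguing that the single causal pair $f_0 < f_1$ in $\F$ never obstructs a step response: the infinite pool of free concurrent events in $\F$ must supply, at every reachable configuration and regardless of whether $f_0$ or $f_1$ has already been executed, enough enabled concurrent events of label $a$ to match any step of $\E$ (and symmetrically), which is what makes the size-matching relation $R$ closed under stepping.
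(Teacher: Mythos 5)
Your proof is correct: the relation $R=\{(X,Y):|X|=|Y|\}$ is indeed a step bisimulation (from any configuration of either structure one can always fire a step of any finite size $k$ labelled $\{a^k\}$, since an infinite antichain of enabled events always remains), and the two-point chain $\{f_0,f_1\}$ cleanly kills both $\pteq$ and $\whbeq$. However, your route is genuinely different from the paper's: you use \emph{infinite} EESs, whereas the paper exhibits a pair of \emph{finite} eight-event EESs (found by exhaustive search over transitively reduced DAGs --- the smallest non-isomorphic pair with the same multiset of source-deleted subgraphs) and verifies step bisimilarity via explicit isomorphisms between the structures obtained by deleting minimal events. Your construction is far simpler and suffices for the proposition as literally stated, but it buys less: this proposition is also the source of the strict inclusion $\sbeq\ \supset\ \pteq\,=\,\whbeq\,=\,\cong$ in the spectrum for \emph{finite} EESs (Figure~\ref{fig:finiteEESs}), and an infinite counterexample cannot justify that part of the paper's results, so the harder finite example is not avoidable in context. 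Both constructions share the feature, which the paper explicitly exploits in its discussion of labelings, that all events carry the same label.
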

\begin{proof}
Consider the EESs
$$
\begin{array}{cccc}
\E \ = &
\xymatrix@R=12pt@C=5pt{
&&&e_7
\\
&&&e_6
\\
e_2 \ar[uurrr] && e_3 \ar[ur] && e_4 \ar[ul] && e_5 \ar[uulll]
\\
&e_0 \ar[ul]\ar[ur]\ar[urrr] &&&& e_1 \ar[ulll]\ar[ul]\ar[ur]
}
&
\qquad \F \ = &
\xymatrix@R=24pt@C=5pt{
&e_6'&&&&e_7'
\\
e_2' \ar[ur] && e_3' \ar[ul] && e_4' \ar[ur] && e_5' \ar[ul]
\\
&e_0' \ar[ul]\ar[ur]\ar[urrr] &&&& e_1' \ar[ulll]\ar[ul]\ar[ur]
}
\end{array}
$$
where all events $e_0,\ldots,e_7,e_0',\ldots,e_7'$ have the same label.
It can be readily checked that $\E \not\cong \F$ because events $e_6$ and $e_7$ in $\E$ have no
isomorphic correspondence in $\F$. Thus, trivially, $\E$ and $\F$ cannot be in $\pteq$; moreover, they cannot either be in $\whbeq$ because
every weak history preserving bisimulation must contain the pair $(E,F)$, but this is not possible
since $\E \not\cong \F$.

By contrast, we shall now prove that $\E \sbeq \F$.
To this aim, for a generic EES $\E=(E,\leq,l)$ and for every $X \in \Conf\E$, we denote with 
$\E_X$ the EES
$(E \setminus X, \leq|_{E \setminus X} , l |_{E \setminus X})$.
It is now easy to check that:
\begin{itemize}
\item $\E_{\{e_0\}} \cong \F_{\{e_0'\}}$, via some isomorphism $f_{e_0,e_0'}$ (for example: $(e_1,e_1'),(e_2,e_2'),(e_3,e_4'),(e_4,e_5'),(e_5,e_3'),$ $(e_6,e_7'),(e_7,e_6')$);
\item $\E_{\{e_1\}} \cong \F_{\{e_1'\}}$, via some isomorphism $f_{e_1,e_1'}$ (for example: $(e_0,e_0'),(e_2,e_4'),(e_3,e_2'),(e_4,e_3'),(e_5,e_5'),$ $(e_6,e_6'),(e_7,e_7')$);
\item $\E_{\{e_0,e_1\}} \cong \F_{\{e_0',e_1'\}}$, via some isomorphism $f_{e_0e_1,e_0'e_1'}$ (for example: $(e_2,e_2'),(e_3,e_4'),(e_4,e_5'),(e_5,e_3'),$ $(e_6,e_7'),(e_7,e_6')$).
\end{itemize}
Notationally, let $f(X)$ denote $\{f(x)\ :\ x \in X\}$;
it can be now verified that the relation
$$
\begin{array}{ll}
R \ = & \{(\emptyset,\emptyset)\}\ \cup
\vspace*{.1cm}
\\
& \cup\ \bigcup_{X \in \Conf{\E_{\{e_0\}}}}  \{ (  \{e_0\}\cup X , \{e_0'\}\cup f_{e_0,e_0'}(X) ) \}
\vspace*{.1cm}
\\
& \cup\ \bigcup_{X \in \Conf{\E_{\{e_1\}}}}  \{ (  \{e_1\}\cup X , \{e_1'\}\cup f_{e_1,e_1'}(X) ) \}
\vspace*{.1cm}
\\
& \cup\ \bigcup_{X \in \Conf{\E_{\{e_0,e_1\}}}}  \{ (  \{e_0,e_1\}\cup X , \{e_0',e_1'\}\cup f_{e_0e_1,e_0'e_1'}(X) ) \}
\end{array}
$$
is a step bisimulation between $\E$ and $\F$.
\end{proof}

An easy corollary of the previous result is that, for EESs, $\sbeq$ is not contained
in $\pbeq$, $\hbeq$, $\hhbeq$, and $\cong$.
Furthermore, notice that the examples provided in Propositions~\ref{prop:i-s} and~\ref{prop:s-hh}
use EESs with a ``flattening'' labeling function (mapping all events to the same label);
by contrast, this is not the case in Propositions  \ref{prop:it-ib} and \ref{prop:st-i}. This is 
not incidental, since,
for EESs with all events labeled the same, $\ibeq$ and $\iteq$ coincide; to prove this,
we first need a lemma.

\begin{lemma}
\label{lem:nored}
Let $\E = (E,\leq,l)$ be an EES 
and let $X \in \Conf\E$; then either $X\ \red{l(e)} X\cup\{e\}$, for some $e \in E \setminus X$, or $X = E$.
\end{lemma}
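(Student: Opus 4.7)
The plan is to prove the lemma by showing that whenever $X \subsetneq E$, we can locate some event $e \in E \setminus X$ that is minimal among the ``missing'' causal predecessors of some chosen event outside $X$, so that adding $e$ to $X$ yields a configuration.

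First I would dispose of the trivial case: if $X = E$, there is nothing to prove. So suppose $X \neq E$ and pick an arbitrary $e_0 \in E \setminus X$. Consider the set
\[
D \;\defi\; \{e' \in E \setminus X \ :\ e' \leq e_0\},
\]
which is nonempty (it contains $e_0$) and finite, since by the definition of a PES the set $\{e' : e' \leq e_0\}$ is finite. Next I would pick any $\leq$-minimal element $e$ of $D$ (well-defined because $D$ is a finite nonempty subset of a poset).

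The key step is to verify that $\{e' : e' < e\} \subseteq X$. Indeed, any $e' < e$ satisfies $e' < e_0$ by transitivity, so $e'$ would belong to $D$ if it were not in $X$; by minimality of $e$ inside $D$, this is impossible, so $e' \in X$. Hence $X \cup \{e\}$ is closed under causal predecessors. Since $\E$ is an EES, the conflict relation is empty, so the only requirement on a configuration beyond finiteness and causal closure is vacuous. Thus $X \cup \{e\}$ is a configuration, $X \subset X \cup \{e\}$ and $(X \cup \{e\}) \setminus X = \{e\}$, which by the definition of the transition relation means $X \red{l(e)} X \cup \{e\}$.

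This is a routine well-foundedness argument; the only subtle point is the use of the PES axiom guaranteeing finiteness of the down-set of $e_0$, which is what allows the selection of a minimal element of $D$ and is precisely where we rely on the structural axioms of EESs. No obstacle worth calling ``the hard part'' arises, because the absence of conflict makes causal-closure the only obstruction to extending a configuration.
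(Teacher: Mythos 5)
Your proof is correct and is essentially the paper's argument in a different dress: the paper proceeds by induction on $|\{e' : e' < e_0\}|$, descending to a causal predecessor of $e_0$ that is missing from $X$, which is exactly the minimal element of your set $D$. Both hinge on the same two facts --- finiteness of down-sets (guaranteeing the descent terminates) and emptiness of the conflict relation (so causal closure is the only obstruction) --- so there is nothing to add.
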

\begin{proof}
Let $e \in E$; by induction on $|\{e'\ :\ e' < e\}|$, we prove that either $e \in X$ 
or there exists an $e' \leq e$ such that
$X \red{l(e')} X \cup \{e'\}$. The base case is trivial. For the inductive case, let us assume $e$ with at least one
predecessor. If $e \in X$, we are done. If $X$ contains all the predecessors of $e$ (but not $e$), then
$X\ \red{l(e)} X\cup\{e\}$. Otherwise, consider any $e' < e$ not contained in $X$; the claim follows by induction,
since $e'$ has less predecessors than $e$ (indeed, every predecessor of $e'$ is also a predecessor of $e$).
\end{proof}

\begin{proposition}
For EESs with labeling set ${\cal A}=\{a\}$, $\ibeq=\,\iteq$.
\end{proposition}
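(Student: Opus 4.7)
The easy direction $\ibeq\,\subseteq\,\iteq$ is standard, so the only work is showing $\iteq\,\subseteq\,\ibeq$ when $\mathcal{A}=\{a\}$. My plan is to propose the very coarse relation
$$R \;\defi\; \{(X,Y)\in\Conf\E\times\Conf\F : |X|=|Y|\}$$
and verify it is an interleaving bisimulation. The intuition is that, with a single label, the trace set carries no information beyond the set of reachable configuration sizes, and Lemma~\ref{lem:nored} shows that this set is as rich as possible.

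First I would observe that since every label is $a$, we have $\STr\E = \{a^n : n\in T_\E\}$ where $T_\E \defi \{|X| : X\in\Conf\E\}$. A direct application of Lemma~\ref{lem:nored}, iterated from $\emptyset$, shows that $T_\E$ is an initial segment of $\mathbb{N}$: concretely, $T_\E = \{0,1,\dots,|E|\}$ when $E$ is finite and $T_\E=\mathbb{N}$ when $E$ is infinite. The same holds for $\F$. Hence the hypothesis $\STr\E=\STr\F$ forces $|E|=|F|$ (the equality also covers the case where both are infinite).

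Next I would check the bisimulation clauses for $R$. The pair $(\emptyset,\emptyset)$ belongs to $R$. Suppose $(X,Y)\in R$ and $X\red a X'$; then $X'=X\cup\{e\}$ for some $e\in E\setminus X$ and $|X'|=|X|+1$. From $X\subsetneq E$ and $|E|=|F|$ we deduce $|Y|=|X|<|F|$ in the finite case, and $|Y|<\infty=|F|$ in the infinite case, so in either case $Y\neq F$. Lemma~\ref{lem:nored} then supplies $e'\in F\setminus Y$ with $Y\red a Y'=Y\cup\{e'\}$ (necessarily labelled $a$), and $|Y'|=|X'|$ yields $(X',Y')\in R$. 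The backward clause is symmetric.

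No step is really hard; the only thing worth spotting is that one should size-match configurations rather than try to match their internal structure, which is exactly what a single-letter alphabet permits. The supporting fact that trace lengths exhaust an initial segment of $\mathbb{N}$—i.e., that the system cannot ``skip'' a cardinality—is precisely what Lemma~\ref{lem:nored} delivers, and is the only place where EES-specific structure is used.
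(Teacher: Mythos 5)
Your proposal is correct and follows essentially the same route as the paper's own proof: the same size-matching relation $R=\{(X,Y):|X|=|Y|\}$, the same use of Lemma~\ref{lem:nored} to show that trace sets are initial segments of $a^*$ (hence $|E|=|F|$) and to supply the matching transition. Your treatment is, if anything, slightly more explicit about the infinite case than the paper's.
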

\begin{proof}
Lemma \ref{lem:nored} entails that $\STr\E$ is $\{a^n\,:\, 0 \leq n \leq |E|\}$, if $E$ is finite, or
$\{a^n\,:\, n \geq 0\}$, otherwise. The same holds for $\F$; hence, if $\E \iteq \F$, then $|E|=|F|$.

Now, let $\E \iteq \F$ and $R = \{(X,Y)\, :\, X \in \Conf\E, Y \in \Conf\F, |X|=|Y|\}$.
Trivially, $(\emptyset,\emptyset) \in R$. Furthermore, if $(X,Y) \in R$ and $X \red a X'$, then
$|Y| = |X| < |E| = |F|$; again by Lemma \ref{lem:nored}, 
there exists $Y \red a Y'$ and, by construction, $(X',Y') \in R$.
\end{proof}

Hence, differently from \CS s, in the framework of EESs 
the labeling function has an impact on the distinguishing
power of the equivalences studied.

To complete the hierarchy of equivalences for EESs, we surprisingly discovered that there is a deep
difference if we consider finite or infinite event structures. For the former ones, we have been able to
completely define the spectrum; for the latter ones, we still have some open questions, mostly on the
history preserving bisimulations.

\subsection{Finite EESs}


For finite EESs, we have the following results
that lead to the spectrum in Figure~\ref{fig:finiteEESs}.

\begin{proposition}
\label{prop:pteq-isom}
Let $\E$ and $\F$ be finite EESs such that $\E \pteq \F$; then $\E \cong \F$.
\end{proposition}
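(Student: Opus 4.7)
The plan is to exploit the special nature of finite EESs, namely that in the absence of conflict the whole event set is itself a configuration. For any finite EES $\E=(E,\leq,l)$ the set $E$ is trivially downward closed and trivially conflict-free (since $\sharp=\emptyset$), and it is finite by assumption; hence $E\in\Conf\E$ and, symmetrically, $F\in\Conf\F$. Consequently $\pom E\in\POM\E$ and $\pom F\in\POM\F$.

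Next I would use the hypothesis $\E\pteq\F$, i.e.\ $\POM\E=\POM\F$. From $\pom E\in\POM\F$ we obtain some $Y\in\Conf\F$ with $\poset Y\cong\poset E$ as labelled posets; in particular $|Y|=|E|$. Symmetrically, from $\pom F\in\POM\E$ we get some $X\in\Conf\E$ with $|X|=|F|$. Since $X\subseteq E$ and $Y\subseteq F$, we conclude $|F|=|X|\le|E|=|Y|\le|F|$, so all these cardinalities coincide; in particular $Y=F$ (and $X=E$), because $Y\subseteq F$ and $F$ is finite with $|Y|=|F|$.

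It then follows that $\poset F\cong\poset E$ as labelled posets, via some bijection $f:E\to F$ that respects $\leq$ and the labeling. Because the conflict relations of $\E$ and $\F$ are both empty, the biconditional $e\sharp_E e'\iff f(e)\sharp_F f(e')$ is vacuously satisfied, and $f$ therefore meets all three clauses of Definition~\ref{def:PES-isom}. Hence $\E\cong\F$.

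There is no real obstacle here: the only nontrivial point is the initial observation that a finite EES, lacking conflict, always contains its full event set among its configurations, which reduces pomset equivalence to a statement about $\poset E$ and $\poset F$ directly; once this is in place, the argument is just a counting step plus the trivial matching of the empty conflict relations.
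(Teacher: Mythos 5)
Your proof is correct and follows the same route as the paper's: the key observation that in a finite EES the full event set $E$ is itself a configuration, so that $\E \pteq \F$ forces $\pom E = \pom F$ and hence $\E \cong \F$ (the empty conflict relations being matched vacuously). In fact your cardinality step ($|F|=|X|\le|E|=|Y|\le|F|$, whence $Y=F$) makes explicit a point the paper leaves implicit, namely why $\pom E$ must be realized by the whole of $F$ rather than by a proper sub-configuration.
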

\begin{proof}
The key observation is that, in every finite EES $\E$, the set of all the events $E$ is
a configuration (it is finite, conflict-free and closed by causal predecessors). Hence,
$Pomset(E) = \E$. So, if $\E \pteq \F$, it holds that $\E$ and $\F$ have the same pomsets;
in particular, the pomsets corresponding to $E$ and $F$ must be the same. Hence, the two EESs
are isomorphic.
\end{proof}

\begin{proposition}
For finite EESs, $\cong\ =\ \hhbeq\ =\ \hbeq\ =\ \whbeq\ =\ \pbeq\ =\ \pteq$.
\end{proposition}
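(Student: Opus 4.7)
The plan is to reduce everything to a loop of inclusions by combining the general PES spectrum (which, since EESs are a subclass of PESs, specializes to EESs for free) with Proposition~\ref{prop:pteq-isom}, which already establishes $\pteq\ \subseteq\ \cong$ for finite EESs. Concretely, from Figure~\ref{eq-pes} we inherit the chain
$$\cong\ \subseteq\ \hhbeq\ \subseteq\ \hbeq\ \subseteq\ \pbeq\ \subseteq\ \pteq$$
together with the two extra inclusions $\hbeq\ \subseteq\ \whbeq$ and $\whbeq\ \subseteq\ \pteq$ coming from the same figure. All of these hold in the EES setting, since the definitions of configurations and of the various transition relations, bisimulations, and trace equivalences do not depend on the conflict relation being non-empty.

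Next, I would invoke Proposition~\ref{prop:pteq-isom} to close the cycle: it gives $\pteq\ \subseteq\ \cong$ for finite EESs. Chaining this with the inclusions above yields
$$\cong\ \subseteq\ \hhbeq\ \subseteq\ \hbeq\ \subseteq\ \pbeq\ \subseteq\ \pteq\ \subseteq\ \cong,$$
so all five of $\cong,\hhbeq,\hbeq,\pbeq,\pteq$ coincide. Finally, since $\hbeq\ \subseteq\ \whbeq\ \subseteq\ \pteq$ and the two endpoints are now equal, $\whbeq$ must coincide with them as well.

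There is essentially no technical obstacle here: the only non-trivial step is Proposition~\ref{prop:pteq-isom}, and that has already been proved by observing that in any finite EES the full event set $E$ is itself a configuration, so its pomset is the EES itself up to isomorphism. The rest is a matter of citing the PES-level inclusions and arranging the chain of subset relations in the right order.
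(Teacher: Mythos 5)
Your argument is correct, and for the five equivalences other than $\whbeq$ it coincides with the paper's proof: both close the cycle $\cong\ \subseteq\ \hhbeq\ \subseteq\ \hbeq\ \subseteq\ \pbeq\ \subseteq\ \pteq\ \subseteq\ \cong$, the last inclusion being Proposition~\ref{prop:pteq-isom}. Where you genuinely diverge is in how $\whbeq$ is folded in. You sandwich it via $\hbeq\ \subseteq\ \whbeq\ \subseteq\ \pteq$, both inclusions read off Figure~\ref{eq-pes}; this is sound (inclusions between equivalences are universally quantified statements and so restrict to the subclass of EESs), but note that $\whbeq\ \subseteq\ \pteq$ is precisely one of the two inclusions the paper attributes to \cite{Fecher04} rather than to \cite{GG01}, so your route silently leans on that non-obvious external result. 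The paper instead handles $\whbeq$ with a short self-contained game argument: if $\E \whbeq \F$ and $|E| = n$, run $\emptyset \red{a_1} \cdots \red{a_n} E$; any reply must end in some $F'$ with $F' = F$, since otherwise $F'$ could still move while $E$ cannot, and then the $\whbeq$ requirement $\poset E \cong \poset F$ immediately gives $\E \cong \F$ (again exploiting that the full event set of a finite EES is itself a configuration). Both routes work; yours is more uniform, being pure inclusion-chasing, while the paper's avoids any dependence on the $\whbeq\ \subseteq\ \pteq$ inclusion at the cost of one extra direct argument.
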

\begin{proof}
For all equivalences but $\whbeq$ the claim is an
easy corollary of the previous proposition, by the fact that $\cong\ \subseteq\ \pteq$ for PESs
(and, hence, also for EESs). For $\whbeq$, take $\E \whbeq \F$ and consider a sequence of transitions
$\emptyset \red{a_1} \ldots \red{a_n} E$ (whenever $|E| = n$). The only possible reply to this sequence
is some $\emptyset \red{a_1} \ldots \red{a_n} F'$ such that $F' = F$, otherwise $F' \red{}$ whereas
$E \ \not\!\!\red{}$. Thus, $\E \cong\F$, since $\E = \poset E$ and $\F = \poset F$.
\end{proof}


\subsection{Infinite EESs}

For infinite EESs, we first notice that, if we consider EESs of different cardinality, 
Proposition~\ref{prop:pteq-isom} does not hold.
To see this, consider $\E$ and $\F$ made up, respectively, by a numerable and by a non-numerable
set of concurrent copies of the same pomset; clearly, the two structures have the same (finite) pomsets
and, hence, are pomset trace equivalent, but of course they are not isomorphic.

Moreover, Proposition~\ref{prop:pteq-isom} does not hold either for EESs of the same cardinality,
as the following Propositions entail.

\begin{proposition}
There exist $\E$ and $\F$ infinite EESs such that $\E \pbeq \F$, $\E \not\hbeq \F$ and $\E \not\whbeq \F$.
\end{proposition}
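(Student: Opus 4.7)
The plan is to exhibit two concrete infinite EESs witnessing the separation. Let $\E$ consist of countably many independent length-2 chains $a_i < b_i$ ($i \in \mathbb{N}$) together with countably many independent length-3 chains $c_i < d_i < e_i$ ($i \in \mathbb{N}$), and let $\F$ consist of countably many independent length-3 chains $x_k < y_k < z_k$ ($k \in \mathbb{N}$); all events on both sides carry the same label $a$. The intuition is that both structures admit exactly the same firable pomsets (finite disjoint unions of chains of length at most 3), yet $\E$ can reach complete length-2 chains whose maxima $b_i$ are maximal in $\E$, whereas $\F$ realizes the pomset ``chain of 2'' only as a prefix of a length-3 chain, which can always be extended one more step.

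For $\E \pbeq \F$, I would take the full relation $R = \Conf\E \times \Conf\F$ as a candidate pomset bisimulation. The key observation is that from any finite configuration $X$ of $\E$ (respectively $Y$ of $\F$) the residual still contains cofinitely many untouched length-3 chains together with at most finitely many ``stubs'' of length 1 or 2 arising from partially-consumed chains. Hence the set of firable pomsets from either residual is exactly the set of finite disjoint unions of chains of length at most 3. Since this set coincides on both sides, every $\E$-step labelled by a pomset $p$ is matchable by an $\F$-step labelled by the same $p$ (and symmetrically), while the targets lie in $R$ trivially.

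For $\E \not\whbeq \F$, I would argue by contradiction. Suppose $R$ is a weak history preserving bisimulation containing $(\emptyset,\emptyset)$. The move $\emptyset \red{a} \{a_1\}$ in $\E$ forces $\F$ to reply with some $\{x_k\}$, the only pomset-isomorphic singleton. Next, $\{a_1\} \red{a} \{a_1,b_1\}$ takes $\E$ to a complete length-2 chain of pomset ``chain of 2''; the unique pomset-preserving $\F$-reply is $\{x_k, y_k\}$. But $\F$ can now fire $z_k$, reaching a configuration of pomset ``chain of 3''. Every $a$-extension of $\{a_1,b_1\}$ in $\E$ must add a fresh minimal event (since $b_1$ is maximal in $\E$), yielding pomset ``chain of 2 + single event'' and never ``chain of 3'' --- a contradiction. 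The claim $\E \not\hbeq \F$ then follows from the inclusion $\hbeq\ \subseteq\ \whbeq$ established in Figure~\ref{eq-pes}.

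The main obstacle is conceptual: finding the right asymmetry so that $\E$ and $\F$ remain pomset-bisimilar while some reachable configurations are non-pomset-isomorphic. Once the example is fixed, the $\pbeq$-verification reduces to a short combinatorial argument on residuals, and the $\whbeq$-refutation is forced by the uniqueness of $\F$'s pomset-preserving reply at each step.
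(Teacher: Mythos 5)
Your proof is correct, but it uses a genuinely different counterexample from the paper's. The paper takes $\E$ to be a ``triangular'' family containing one finite chain of every length $n\geq 1$ and $\F$ to be infinitely many pairwise-concurrent \emph{infinite} chains; the failure of $\whbeq$ is forced at the very first step (the isolated event $e_{00}$ has no causal successor, while every minimal event of $\F$ does), and the verification of $\pbeq$ is an inductive, layer-by-layer construction of the bisimulation that relies on embedding each fired pomset into fresh, untouched chains of sufficient length. Your example instead uses chains of \emph{bounded} length on both sides (lengths $2$ and $3$ versus length $3$ only), which buys a noticeably simpler $\pbeq$ argument: since every configuration is finite and every chain has length at most $3$, the set of pomsets firable from \emph{any} configuration of either structure is literally the same fixed set (finite disjoint unions of chains of length at most $3$), so the full relation $\Conf\E \times \Conf\F$ is a pomset bisimulation with no embedding or induction needed. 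The price is that your $\whbeq$-refutation needs two steps rather than one (you must first climb to the top of a length-$2$ chain before the mismatch appears), but each reply along the way is indeed forced by the poset-isomorphism requirement of weak history preserving bisimulation, as you argue, and $\E \not\hbeq \F$ then follows from $\hbeq\ \subseteq\ \whbeq$ exactly as in the paper. Both routes are sound; yours is arguably the more elementary for the positive half of the claim, while the paper's example additionally illustrates that unbounded causal depth can be traded against infinite chains without affecting pomset bisimilarity.
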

\begin{proof}
Assume two numerable sets of events, $E = \{e_{ij}\}_{0\leq j\leq i}$ and $F =  \{e'_{ij}\}_{i,j \geq 0}$.
Let $\leq_E$ (resp., $\leq_F$) be such that $e_{ij} \leq_E e_{ik}$ (resp., $e'_{ij} \leq_F e'_{ik}$) if and only if $j \leq k$.
Finally, let every event be labeled with the same label $a$, both in $\E$ and in $\F$.
Pictorially:
$$
\begin{array}{cc}
\xymatrix@R=12pt@C=10pt{
&
\\
&&& e_{22}
\\ 
\E \ = && e_{11} & e_{21}\ar[u]
& \ldots
\\
& e_{00} & e_{10} \ar[u] & e_{20} \ar[u]
}
&
\qquad
\xymatrix@R=12pt@C=10pt{
& \cdots & \cdots & \cdots 
\\
& e'_{02}\ar[u] & e'_{12}\ar[u] & e'_{22}\ar[u]
\\
\F \ = & e'_{01} \ar[u] & e'_{11} \ar[u] & e'_{21} \ar[u] & \ldots
\\
& e'_{00}\ar[u] & e'_{10}\ar[u] & e'_{20}\ar[u]
}
\end{array}
$$

To show that $\E \not\hbeq \F$ and $\E \not\whbeq \F$, consider $\emptyset \red a \{e_{00}\}$:
the only possible reply in $\F$ is $\emptyset \red a \{e'_{i0}\}$, for some $i$. However, 
$\{e_{00}\}$ and $\{e'_{i0}\}$ cannot be related by any history or weak history preserving bisimulation:
indeed, the challenge $\{e'_{i0}\} \red a \{e'_{i0},e'_{i1}\}$ has no possible reply, since there
is no event in $E$ causally dependent on $e_{00}$ (whereas $e'_{i0} <_F e'_{i1}$).

To prove that $\E \pbeq \F$, consider
\[
\begin{array}{ccl}
R_0  & = &  \{ (\emptyset, \emptyset) \} \\
\\
R_{n+1}  & = & \{ (X',Y')\ :\ \exists (X,Y) \in R_n\exists p\,.\ X \red p X' \wedge Y \red p Y' \} \\
\\
R  & = &   \bigcup_{n \geq 0} R_n
\end{array}
\]
We now prove that $R$ is a pomset bisimulation. By construction, $(\emptyset, \emptyset) \in R$.
Let $(X,Y) \in R$. 
If $X \red p X'$, then $p$ is a finite collection of finite chains (w.r.t. $\leq_E$) and, hence, can
be embedded into $\{e'_{ij}\}_{i>n,j\geq 0}$, where $n$ is the largest integer such that $e'_{n0} \in Y$.
Let $\hat Y \subset \{e'_{ij}\}_{i>n,j\geq 0}$ be such that $Pomset(\hat Y) = p$; then, $Y \red p Y \uplus \hat Y = Y'$
and $(X',Y') \in R$ by construction.
If $Y \red p Y'$, then $p$ is a finite collection of finite chains (w.r.t. $\leq_F$); let $h$ be the shortest of such chains. Now, $p$ can be embedded into 
$\{e_{ij}\}_{i>m,0\leq j\leq i}$, where $m = \max\{h,n\}$ and $n$ is the largest integer such that $e_{n0} \in X$.
Let $\hat X \subset \{e_{ij}\}_{i>m,0\leq j\leq i}$ be such that $Pomset(\hat X) = p$; 
then, $X \red p X \uplus \hat X = X'$ and $(X',Y') \in R$ by construction.
\end{proof}

\begin{proposition}
There exist $\E$ and $\F$ infinite EESs such that $\E \pteq \F$ but $\E \not\ibeq \F$.
\end{proposition}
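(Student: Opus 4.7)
The plan is to take, for $\E$, a countable disjoint union of $2$-chains labelled $a;b$, and for $\F$ the same structure augmented with one additional isolated $a$-labelled event. This absorbs, at the pomset level, the distinguishing defect used in Proposition~\ref{prop:i-s} while preserving, at the transition level, the fact that the isolated event has no $b$-successor.

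Concretely, let $\E$ have events $e_i^0 < e_i^1$ for $i \in \mathbb{N}$, with $l(e_i^0) = a$ and $l(e_i^1) = b$; and let $\F$ be obtained from $\E$ by adding a single event $f^*$, causally unrelated to all others, with $l(f^*) = a$.

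First I would show $\E \pteq \F$. Each configuration of $\E$ is, up to pomset isomorphism, determined by a pair $(k,m) \in \mathbb{N}^2$: the number $k$ of fully-taken $2$-chains and the number $m$ of chains from which only the bottom has been taken; the associated pomset consists of $k$ copies of $a;b$ together with $m$ isolated $a$'s. A configuration of $\F$ additionally records whether $f^*$ has been taken, but this merely bumps the count of isolated $a$'s by at most one; the resulting pomset is already in $\POM\E$, witnessed by taking one more chain-bottom instead of $f^*$. Hence $\POM\E = \POM\F$.

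Then I would establish $\E \not\ibeq \F$. Suppose for contradiction $R$ is an interleaving bisimulation with $(\emptyset,\emptyset) \in R$. The transition $\emptyset \red a \{f^*\}$ of $\F$ must be matched by some $\emptyset \red a \{e\}$ in $\E$, which forces $e = e_i^0$ for some $i$ and so $(\{e_i^0\},\{f^*\}) \in R$. Now $\{e_i^0\} \red b \{e_i^0, e_i^1\}$ is available in $\E$, but no $b$-labelled transition leaves $\{f^*\}$: any event we might add to $\{f^*\}$ must have all its causal predecessors in $\{f^*\}$, and since $f^*$ is not a predecessor of anything, every such candidate is minimal in $\F$ and therefore labelled $a$. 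This contradicts the bisimulation property. The only mildly delicate step is the pomset comparison; it works because $\E$ already offers countably many isolated $a$'s as chain-bottoms, so the new event $f^*$ introduces no fresh pomset.
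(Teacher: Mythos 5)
Your proposal is correct and is essentially the paper's own construction: the paper takes $\E$ to be one isolated $a$ alongside infinitely many $a;b$ chains and $\F$ to be the chains alone, which is your pair with the roles of $\E$ and $\F$ swapped (immaterial, since $\pteq$ and $\ibeq$ are symmetric). Both the pomset count ($k$ full chains plus $m$ isolated $a$'s, with the extra event absorbed by a spare chain-bottom) and the bisimulation defect (the isolated $a$ enables no $b$, while every matching $a$ does) coincide with the paper's argument.
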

\begin{proof}
Let us consider 
$$
\begin{array}{cc}
\xymatrix@R=12pt@C=10pt{
&& b & b & b
\\
\E \ = & a & a\ar[u] & a\ar[u] & a\ar[u]
& \ldots
}
&
\qquad
\xymatrix@R=12pt@C=10pt{
& b & b & b
\\
\F \ = & a \ar[u] & a \ar[u] & a \ar[u] & \ldots
}
\end{array}
$$
We have that  $\E \pteq \F$ because
$$
pomsets(\E) = pomsets(\F) = 
\left\{
\begin{array}{c}
\underbrace{
\xymatrix@R=10pt@C=1pt{
&&&
\\
a & a & \ldots & a
}
}_m\ 
\underbrace{
\xymatrix@R=10pt@C=1pt{
b & b && b
\\
a \ar[u] & a \ar[u] & \ldots & a \ar[u]
}
}_n
\end{array}
\right\}_{m,n \geq 0}
$$
By contrast, the singleton $a$ in $\E$ cannot be replied to by any $a$ in $\F$ because the
latter enables a $b$, whereas the former does not.
\end{proof}

An easy corollary of the last Proposition is that $\pteq$ does not imply $\whbeq$, $\pbeq$ and $\sbeq$.
We conclude this section by a list of questions that remain to be answered.

\bigskip
\noindent{\bf Open questions:}
Are there $\E$ and $\F$ (infinite EESs) such that
\begin{enumerate}
\item  $\E \whbeq \F$ but $\E \not\pbeq \F$?
\item  $\E \whbeq \F$ but $\E \not\hbeq \F$? 
\item  $\E \hbeq \F$ but $\E \not\hhbeq \F$?
\item  $\E \hhbeq \F$ but $\E \not\cong \F$?
\end{enumerate}
If all these open questions have a positive answer, the spectrum for infinite EESs,
depicted in Figure~\ref{infEESs},
is the same as the one for general PESs, depicted in Figure~\ref{eq-pes}.
Notice that, if open question 1 has a positive answer, the same holds also for open question 2.
However, we conjecture that also in the setting of infinite EESs all history-preserving 
bisimulation equivalences coincide and coincide with isomorphism; however, we still do not have
enough evidences for formally proving this claim.

\section{Conclusion}

In this paper we studied how the spectrum of equivalences for PESs defined in
\cite{Fecher04,GG01} changes when alternatively removing causality and conflict.
In both cases, equivalences that
are properly included in one another for PESs turn out to coincide and this is more
evident in \CS s than in EESs.
Moreover, both the labeling function and the cardinality of the event set influence the spectrum
for EESs, whereas they have no impact on the spectrum for \CS s. 
For these reasons, we argue that causality is a more foundational building block than conflict in event structures,
since it has a deeper impact on the discriminating power of equivalences for such models and
because it is more sensitive than conflict to issues like the cardinality of the set of events and their labeling.

Surely, our results can be also related to the fact that the equivalences considered are causality-based (apart from the interleaving ones). Maybe, conflict could have a deeper impact than causality on other kinds of equivalences or on different models (for instance, variants of ESs with asymmetric choice, or with two different kinds of choices -- external and internal, or nondeterministic and probabilistic). This is a first interesting line for future research.

Another possible extension of our work is the investigation of other equivalences
(like, e.g., those presented in \cite[Sect.\,3]{GGS13}) and their impact on the spectra presented in this paper.
However, we do not believe that this would change the message conveyed by this paper. By contrast,
a challenging direction for future research would be the adaptation to \CS s and EESs of the logical characterizations
given by \cite{BC14} to the equivalences studied in this paper. In particular, it would be nice to see how the logical 
operators defined in \cite{BC14} can be simplified for capturing the equivalences in the simplified frameworks.

Finally, it is interesting to note that the pair of EESs in the proof of Proposition 
\ref{prop:s-hh} has been obtained through an exhaustive search on transitively 
reduced DAGs, using the tools in the \texttt{nauty/Traces} \cite{nautypage,McKay201494} 
distribution. More precisely, it is the smallest (with respect to number of vertices) 
pair of non-isomorphic transitively reduced DAGs having the same multiset of source-deleted subgraphs.

\bigskip
\noindent{\bf Acknowledgements} We are grateful to 
Silvia Crafa and Paolo Baldan for fruitful discussions and to 
Rob van Glabbeek for the counterexample of Proposition~\ref{prop:u-h-hh}.

\bibliographystyle{eptcs}
\bibliography{sharpDoi,ivanoDoi}

\end{document}